\theoremstyle{plain}
\newtheorem{theorem}{Theorem}[section]
\newtheorem*{theorem*}{Theorem}
\newtheorem{proposition}[theorem]{Proposition}
\newtheorem*{proposition*}{Proposition}
\newtheorem{lemma}[theorem]{Lemma}
\newtheorem*{lemma*}{Lemma}
\newtheorem{corollary}[theorem]{Corollary}
\newtheorem*{conjecture*}{Conjecture}
\newtheorem*{fact*}{Fact}
\newtheorem*{hypothesis*}{Hypothesis}
\newtheorem{claim}[theorem]{Claim}
\newtheorem*{claim*}{Claim}
\theoremstyle{definition}
\newtheorem{definition}[theorem]{Definition}
\theoremstyle{remark}
\newtheorem*{remark*}{Remark}
\newtheorem*{observation*}{Observation}
\newcommand{\R}{\mathbb{R}}
\newcommand{\poly}{\mathrm{poly}}
\newcommand{\norm}[1]{\lVert #1 \rVert}
\newcommand{\iprod}[1]{\langle#1\rangle}
\newcommand{\Esymb}{\mathbb{E}}
\newcommand{\Psymb}{\mathbb{P}}
\DeclareMathOperator*{\E}{\Esymb}
\DeclareMathOperator*{\ProbOp}{\Psymb}
\renewcommand{\Pr}{\ProbOp}
\renewcommand{\epsilon}{\varepsilon}
\newcommand{\Ocal}{\mathcal{O}}
\newcommand{\tOcal}{\widetilde{\mathcal{O}}}
\title{Efficient Tensor Decomposition\thanks{Chapter 19 of the book {\em Beyond the Worst-Case Analysis of Algorithms} ~\cite{bwcabook}.}}
\author{Aravindan Vijayaraghavan\thanks{
  Department of Computer Science,
  Northwestern University. Supported by the National Science Foundation (NSF) under Grant No.~CCF-1652491, CCF-1637585 and CCF-1934931. {\tt aravindv@northwestern.edu}. } }
\date{}
\begin{document}

\maketitle

\begin{abstract}
This chapter studies the problem of decomposing a tensor into a sum of constituent rank one tensors. While tensor decompositions are very useful in designing learning algorithms and data analysis, they are NP-hard in the worst-case. We will see how to design efficient algorithms with provable guarantees under mild assumptions, and using beyond worst-case frameworks like smoothed analysis. 

\end{abstract}

\tableofcontents

\section{Introduction To Tensors}

Tensors are multi-dimensional arrays, and constitute natural
generalizations of matrices. Tensors are fundamental linear algebraic entities, and widely used in physics, scientific computing and signal processing to represent multi-dimensional data or capture multi-wise correlations.  The different dimensions of the array are called the {\em modes} and the {\em order} of a tensor is the number of dimensions or modes of the array, as shown in Figure~\ref{fig:orders}. 
The order of a tensor also corresponds to the number of indices needed to specify an entry of a tensor. Hence every $(i_1,i_2, i_3) \in [n_1] \times [n_2] \times [n_3]$ specifies an entry of the tensor $T$ that is denoted by $T(i_1, i_2, i_3)$.
\begin{figure}[h]
  \begin{minipage}[c]{0.67\textwidth}
    \includegraphics[width=\textwidth]{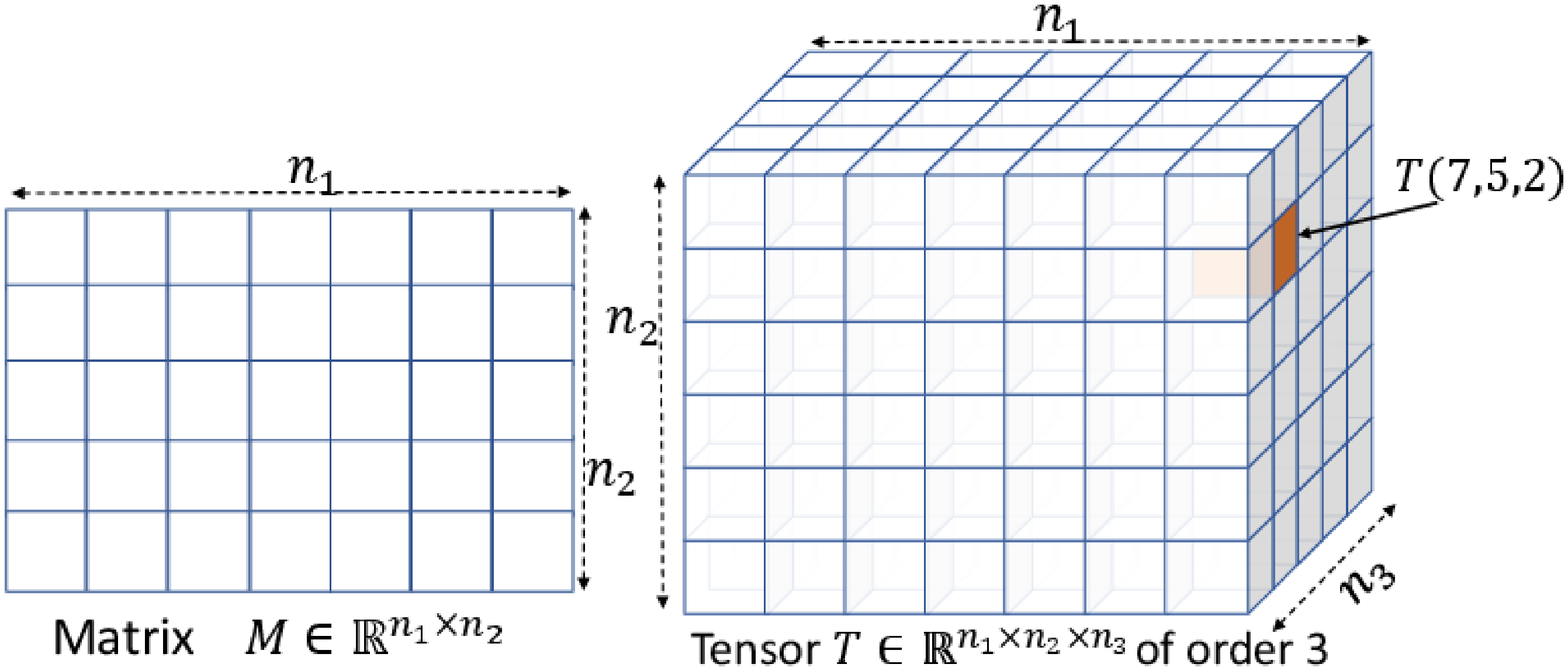}
  \end{minipage}\hfill
  \begin{minipage}[c]{0.3\textwidth}
    \caption{ shows a matrix $M$ which is a tensor of order $2$, and a tensor $T$ of order $3$ with $n_1=7,n_2=6,n_3=5$. The position of the entry $T(7,4,2)$ is highlighted. An order $1$ tensor corresponds to a vector, and an order $0$ tensor is a scalar. 
    } \label{fig:orders}
  \end{minipage}
\end{figure}
%
%
%

 While we have a powerful toolkit of algorithms like low rank approximations and eigenvalue decompositions for matrices, our algorithmic understanding in the tensor world is limited. As we will see soon many basic algorithmic problems like low-rank decompositions are NP-hard in the worst case for tensors (of order $3$ and above). But on the other hand, many higher order tensors satisfy powerful structural properties that are simply not satisfied by matrices. This makes them particularly useful for applications in machine learning and data analysis. In this chapter, we will see how we can indeed overcome this worst-case intractability under some natural non-degeneracy assumptions or using smoothed analysis, and 
also exploit these powerful properties for designing efficient learning algorithms.  


\subsection{Low-rank decompositions and rank}

We start with the definition of a rank one tensor. 
An order $\ell$ tensor $T \in \R^{n_1 \times \dots \times n_\ell}$ is {\em rank one} if and only if it can be written as an outer product $v_1 \otimes v_2 \otimes \dots \otimes v_\ell$ for some vectors $v_1 \in \R^{n_1}, \dots, v_\ell \in \R^{n_\ell}$ i.e., 
$$ T(i_1, i_2, \dots, i_\ell) = v_1(i_1) v_2(i_2) \dots v_\ell(i_\ell) ~~ \forall (i_1,\dots,i_\ell) \in [n_1] \times \dots \times [n_\ell].$$ 
Note than when $\ell=2$, this corresponds to being expressible as $v_1 v_2^T$. 

\begin{definition}\label{def:lowrank}[Rank $k$ decomposition]
A tensor $T$ is said to have a decomposition of {\em rank} $k$ iff it is expressible as the sum of $k$ rank one tensors i.e.,   
$$\exists \set{u^{(j)}_i | i \in [k], j \in [\ell]}, ~\text{s.t. }~~T=\sum_{i=1}^k u^{(1)}_i \otimes u^{(2)}_i \otimes \dots \otimes u^{(\ell)}_i.$$
Moreover $T$ has rank $k$ if and only if $k$ is the smallest natural number for which $T$ has a rank $k$ decompostion. 
\end{definition}
The vectors $\set{u^{(j)}_i: i \in [k], j \in [\ell]}$ are called the {\em factors} of the decomposition. To keep track of how the factors across different modes are grouped, we will use  $U^{(j)}=( u^{(j)}_i: i \in [k])$ for $j \in [\ell]$ to represent the factors. These ``factor matrices'' all have $k$ columns, one per term of the decomposition. Finally, we will also consider {\em symmetric tensors} -- a tensor $T$ of order $\ell$ is symmetric iff $T(i_1, i_2, \dots, i_r)=T(i_{\sigma(1)}, i_{\sigma(2)}, \dots, i_{\sigma(r)})$ for every permutation $\sigma$ over $\set{1,2,\dots,r}$ (see Exercise ~\ref{exer:symm_rank} for an exercise about decompositions of symmetric tensors). 
 
\paragraph{Differences from matrix algebra and pitfalls.} 
Observe that these definition of rank, low-rank decompositions specialize to the standard notions for matrices ($\ell=2$). However it is dangerous to use intuition we have developed from matrix algebra to reason about tensors because of several fundamental differences. 
Firstly, an equivalent definition for rank of a matrix is the dimension of the row space, or column space. This is {\em not true} for tensors of order $3$ and above. In fact for a tensor of order $\ell$ in $\R^{n^{\times \ell}}$, the rank as we defined it could be as large as $n^{\ell-1}$, while the dimension of the span of $n$ dimensional vectors along any of the modes can be at most $n$. The definition that we study in Definition~\ref{def:lowrank} (as opposed to other notions like Tucker decompositions) is motivated by its applications to statistics and machine learning. 



Secondly, much of the spectral theory for matrices involving eigenvectors and eigenvalues does not extend to tensors of higher order. For matrices, we know that the best rank-$k$ approximation consists of the leading $k$ terms of the SVD. 
However this is not the case for tensor decompositions. The best rank-$1$ approximation may not be a factor in the best rank-$2$ approximation.
Finally, and most importantly, the algorithmic problem of finding the best rank-$k$ approximation of a tensor is NP-hard in the worst-case, particularly for large $k$;\footnote{For small $k$, there are algorithms that find approximately optimal rank-$k$ approximations in time exponential in $k$ (see e.g.,  \cite{BCV,Woodrufftensor}).} for matrices, this is of course solved using singular value decompositions (SVD). In fact, this worst-case NP-hardness for higher order tensors is true for most tensor problems including computing the rank, computing the spectral norm etc.~\cite{Has90,HL}.

For all of the reasons listed above, and more,\footnote{
There are other definitional issues with the rank -- 
there are tensors of a certain rank, that can be arbitrarily well-approximated by tensors of much smaller rank i.e., the ``limit rank'' (or formally, the border rank) may not be equal to the rank of the tensor. See Exercise~\ref{exer:border_rank} for an example.} 
it is natural to ask, why bother with tensor decompositions at all? We will now see a striking property (uniqueness) satisfied by low-rank decompositions of most higher order tensors (but not satisfied by matrices), that also motivates many interesting uses of tensor decompositions.


\paragraph{Uniqueness of low-rank decompositions.}
A remarkable property of higher order tensors is that (under certain conditions that hold typically) their minimum
rank decompositions are unique upto trivial scaling and permutation.  This is in sharp contrast to matrix
decompositions. For any matrix $M$ with a rank $k \ge 2$ decomposition $M=U V^T=\sum_{i=1}^k u_i v_i^T$, there exists several other rank $k$ decompositions $M=U' (V')^T$, where $U'=UO$ and $V'=V O$ for any rotation matrix $O$ i.e., $O O^T=I_{k}$; in particular, the SVD is one of them. 
This {\em rotation problem}, is a common issue when using matrix decompositions in factor analysis (since we can only find the factors up to a rotation).  

The first uniqueness result for tensor decompositions was due to Harshman~\cite{Har70}(who in turn credits it to Jennrich), assuming what is known as the ``full rank condition''. 
In particular, if $T \in \R^{n \times n \times n}$ has a decomposition 
$$T=\sum_{i=1}^k u_i \otimes u_i \otimes u_i, \text{ s.t. }  \set{u_i: i \in [k]} \subset \R^n \text{ are linearly independent},$$ (or the factor matrix $U$ is full rank), then this is the {\em unique decomposition} of rank $k$ up to permuting the terms.  (The statement is actually a little more general and also handles non-symmetric tensors; see Theorem~\ref{thm:jennrich}).
Note that 
the full rank condition requires $k \le n$ (moreover it holds when the vectors are in general position in $n\ge k$ dimensions). 
What makes the above result even more surprising is that, the proof is algorithmic! We will in fact see the algorithm and proof in Section~\ref{sec:jennrich}. This will serve as the workhorse for most of the algorithmic results in this chapter.  
Kruskal~\cite{Kru77} gave a more general condition that guarantees uniqueness up to rank $3n/2 -1$, using a beautiful non-algorithmic proof. Uniqueness is also known to hold for {\em generic} tensors of rank $k=\Omega(n^2)$ (here ``generic'' means all except a measure zero set of rank $k$ tensors). 
We will now see how this remarkable property of uniqueness will be very useful for applications like learning latent variable models.   
  


\section{Applications to Learning Latent Variable Models} \label{sec:applications}

A common approach in unsupervised learning is to assume that the data (input) that is given to us is drawn from a probabilistic model with some latent variables and/or unknown parameters $\theta$, that is appropriate for the task at hand i.e., the structure we want to find. This includes mixture models like mixtures of Gaussians, topic models for document classification etc. A central learning problem is the efficient estimation of such latent model parameters from observed data. 

A necessary step towards efficient learning is
to show that the parameters are indeed identifiable after observing polynomially many samples. 
The method of moments approach, pioneered by Pearson, infers model parameters from empirical moments such as means, pairwise correlations and other higher order correlations. In general, very
high order moments may be needed for this approach to succeed and the unreliability of empirical estimates of these moments leads to large sample complexity (see e.g., \cite{MV10,BS10}). In fact, for latent variable models like mixtures of $k$ Gaussians, an exponential sample complexity of $\exp(\Omega(k))$ is necessary, if we make no additional assumptions. 

On the computational side, maximum likelihood estimation 
i.e., $\text{argmax}_\theta \Pr_\theta[data]$ 
is NP-hard for many latent variable models (see e.g., \cite{TD18}). Moreover iterative heuristics like expectation maximization (EM) tend to get stuck in local optima. Efficient tensor decompositions  when possible, present an algorithmic framework that is both statistically and computationally efficient, for recovering the parameters.  


\subsection{Method-of-moments via tensor decompositions: a general recipe}\label{sec:recipe}

The method-of-moments is the general approach of inferring parameters of a distribution, by computing empirical moments of the distribution and solving for the unknown parameters.   
The moments of a distribution over $\R^n$ are naturally represented by tensors. The covariance or the second moment is an $n \times n$ matrix, the third moment is represented by a tensor of order $3$ in $\R^{n \times n \times n}$ (the $(i_1,i_2,i_3)th$ entry is $\E[x_{i_2} x_{i_2} x_{i_3}]$), and in general the $\ell$th moment is a tensor of order $\ell$. More crucially for many latent variable models $\mathcal{D}(\bar{\theta})$ with parameters $\bar{\theta}$, the moment tensor or a suitable modification of it, has a low-rank decomposition (perhaps up to some small error) in terms of the unknown parameters $\bar{\theta}$ of the model. 
Low rank decompositions of the tensor can then be used to implement the general method-of-moments approach, with both statistical and computational implications. 
The uniqueness of the tensor decomposition then immediately implies {\em identifiability} of the model parameters (in particular, it implies a unique solution for the parameters)! Moreover, a computationally efficient algorithm for recovering the factors of the tensor gives an efficient algorithm for recovering the parameters $\bar{\theta}$. 

%
%
%
%

\paragraph{General Recipe.} This suggests the following algorithmic framework for parameter estimation.  Consider a latent variable model with model parameters $\bar{\theta}=(\theta_1, \theta_2, \dots, \theta_k)$. These could be one parameter each for the $k$ possible values of the latent variable (for example, in a mixture of $k$ Gaussians, the $\theta_i$ could represent the mean of the $i$th Gaussian component of unit variance). 

\begin{enumerate}
\item Define an appropriate statistic $\mathcal{T}$ of the distribution (typically based on moments) such that the expected value of $\mathcal{T}$ has a low-rank decomposition  
$$T=\E_{\mathcal{D}(\theta)}[\mathcal{T}]= \sum_{i=1}^k \lambda_i \theta_i^{\otimes \ell}, \text{ for some } \ell \in \mathbb{N}, \text{ and (known) scalars }\set{\lambda_i : i \in [k]}.$$ 
\item Obtain an estimate $\widetilde{T}$ of the tensor $T=\E[\mathcal{T}]$ from the data (e.g., from empirical moments) up to small error (denoted by the error tensor $E$). 

\item Use tensor decompositions to solve for the parameters $\bar{\theta}=(\theta_1, \dots, \theta_k)$ in the system $\sum_{i=1}^k \lambda_i \theta_i^{\otimes \ell} \approx \widetilde{T}$, to obtain estimates $\widehat{\theta}_1, \dots, \widehat{\theta}_k$ of the parameters. 
\end{enumerate}

The last step involving tensor decompositions is the technical workhorse of the above approach, both for showing identifiability, and getting efficient algorithms. Many of the existing algorithmic guarantees for tensor decompositions  (that hold under certain natural conditions about the decomposition e.g., Theorems~\ref{thm:jennrich} and \ref{thm:main}) provably {\em recover} the rank-$k$ decomposition, thereby giving algorithmic proofs of uniqueness as well. 
However, the first step of designing the right statistic $\mathcal{T}$ with a low-rank decomposition requires a lot of ingenuity and creativity. In Section~\ref{sec:casestudies} we will see two important latent variable models that will serve as our case studies. You will see another application in the next chapter on topic modeling. 

\paragraph{ Need for robustness to errors.}
So far, we have completely ignored sample complexity considerations by assuming access to the exact expectation $T=\E[\mathcal{T}]$, so the error $E=0$ (this requires infinite samples). In polynomial time, the algorithm can only access a polynomial number of samples. Estimating a simple 1D statistic up to $\epsilon=1/\poly(n)$ accuracy typically requires $\Omega(1/\epsilon^2)$ samples, the $\ell$th moment of a distribution requires $n^{O(\ell)}$ samples to estimate up to inverse polynomial error (in Frobenius norm, say). Hence, to obtain polynomial time guarantees for parameter estimation, it is vital for the tensor decomposition guarantees to be noise tolerant i.e., {\em robust} up to inverse polynomial error (this is even assuming no model mis-specification). Fortunately, such robust guarantees do exist -- in Section~\ref{sec:jennrich}, we will show robust analogue of Harshman's uniqueness theorem and related algorithms (see also \cite{BCV} for a robust version of Kruskal's uniqueness theorem). Obtaining robust analogues of known uniqueness and algorithmic results is quite non-trivial and open in many cases (see Section~\ref{sec:open}).  


\subsection{Case studies}\label{sec:casestudies}\label{sec:gmm}

\paragraph{Case Study 1: Mixtures of Spherical Gaussians.} 
Our first case study is mixtures of Gaussians. They are perhaps the most widely studied latent variable model in machine learning for clustering and modeling heterogenous populations.  
We are given random samples, where each sample point $x \in \R^n$ is drawn independently from one of $k$ Gaussian components according to mixing weights $w_1, w_2, \dots, w_k$, where each Gaussian component $j \in [k]$ has a mean $\mu_j \in \R^n$ and a 
covariance $\sigma^2_j I \in \R^{n \times n}$. 
The goal is to estimate the parameters $\set{(w_j, \mu_j, \sigma_j): j \in [k]}$ up to required accuracy $\epsilon>0$ in time and number of samples that is polynomial in $k,n, 1/\epsilon$. 
Existing algorithms based on method of moments have sample complexity and running time that is exponential in $k$ in general~\cite{MV10, BS10}. However, we will see that as long as certain non-degeneracy conditions are satisfied, tensor decompositions can be used to get tractable algorithms that only have a polynomial dependence on $k$ (in Theorem~\ref{thm:gmm:fullrank} and Corollary~\ref{corr:gmm:overcomplete}). 


For the sake of exposition, we will restrict our attention to the uniform case when the mixing weights are all equal and variances $\sigma_i^2 = 1, ~\forall i \in [k]$.
Most of these ideas also apply in the more general setting~\cite{HK13}.  

\newcommand{\mom}{\text{Mom}}
For the first step of the recipe, we will design a statistic that has a low-rank decomposition in terms of the means $\set{\mu_i : i \in [k]}$. 
\begin{proposition} \label{prop:gmm:moments}
For any integer $\ell \ge 1$, one can compute efficiently a statistic $\mathcal{T_\ell}$ from the first $\ell$ moments such that $\E[\mathcal{T}]= T_\ell := \sum_{i=1}^k \mu_i^{\otimes \ell}$.
\end{proposition}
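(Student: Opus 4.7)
My plan is to exhibit $\mathcal{T}_\ell$ as an explicit polynomial in the empirical moment tensors $\widehat M_1,\widehat M_2,\dots,\widehat M_\ell$, designed to cancel out the identity-covariance contributions that each Gaussian component makes to its raw moments. The guide is the multivariate (probabilist's) Hermite tensor $H_\ell$, defined so that $\E_{X\sim \mathcal{N}(\mu,I)}[H_\ell(X)]=\mu^{\otimes \ell}$; up to a factor of $k$, my $\mathcal{T}_\ell$ will be the Hermite-type combination evaluated on the mixture's moments, so that its expectation equals $k\cdot \frac{1}{k}\sum_i \mu_i^{\otimes \ell}=T_\ell$.

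First, I would derive the formula for a single Gaussian component by power-series expanding (or differentiating) the moment generating function $\E[e^{\langle t,X\rangle}]=\exp(\langle t,\mu\rangle+\tfrac12\|t\|^2)$. This yields, equivalently via Isserlis' / Wick's theorem, an identity of the form
$$\E_{X\sim\mathcal{N}(\mu,I)}[X^{\otimes\ell}] \;=\; \sum_{b=0}^{\lfloor \ell/2\rfloor} c_{\ell,b}\,\mathrm{Sym}\bigl(\mu^{\otimes(\ell-2b)}\otimes I^{\otimes b}\bigr),$$
where $\mathrm{Sym}$ symmetrizes over the $\ell$ tensor modes, $I$ denotes the $n\times n$ identity viewed as a 2-tensor, and $c_{\ell,b}$ is a combinatorial constant counting the number of perfect matchings of a set of $2b$ elements (so $c_{\ell,0}=1$).

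Next, I would average over the $k$ components (each with weight $1/k$) and isolate the $b=0$ term to obtain
$$\sum_{i=1}^k \mu_i^{\otimes\ell} \;=\; k\,M_\ell \;-\; \sum_{b\ge 1} c_{\ell,b}\,\mathrm{Sym}\!\Bigl(\bigl(\textstyle\sum_i \mu_i^{\otimes(\ell-2b)}\bigr)\otimes I^{\otimes b}\Bigr),$$
where $M_\ell=\E[X^{\otimes\ell}]$ is the mixture's $\ell$-th moment. Each inner quantity $\sum_i \mu_i^{\otimes(\ell-2b)}$ on the right is itself a $T_{\ell-2b}$ of lower order, so by induction on $\ell$ it is already expressible as a polynomial in $M_1,\dots,M_{\ell-2b}$ and $I$. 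Substituting these recursive expressions yields a closed-form polynomial $\mathcal{T}_\ell(M_1,\dots,M_\ell,I)$; replacing each $M_j$ by the unbiased empirical estimate $\widehat M_j=\tfrac{1}{N}\sum_t (X^{(t)})^{\otimes j}$ preserves the expectation by linearity. Efficiency is immediate, as the whole expression consists of $O(\ell)$ symmetrizations of tensors with $n^\ell$ entries and is computed in time $\mathrm{poly}(n^\ell,N)$.

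The main obstacle I anticipate is purely bookkeeping: verifying that the combinatorial constants $c_{\ell,b}$ and the $\mathrm{Sym}$ operator combine correctly so that, after the recursive substitution, every correction term cancels exactly and the net expectation is $\sum_i \mu_i^{\otimes\ell}$. This is the standard derivation of multivariate Hermite tensors; once those coefficients are pinned down, the rest of the argument reduces to linearity of expectation and induction on $\ell$.
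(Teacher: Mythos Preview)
Your proposal is correct and follows essentially the same approach as the paper: expand $\E[x^{\otimes\ell}]=\frac{1}{k}\sum_i\E_\eta[(\mu_i+\eta)^{\otimes\ell}]$ via Isserlis/Wick, isolate the pure $\mu_i^{\otimes\ell}$ term, and observe that each correction is a symmetrized tensor product of a lower-order $T_{\ell-2b}$ with copies of $I$, which is handled by induction on $\ell$. The paper carries this out explicitly for $\ell=3$ and then appeals to the same inductive/Isserlis argument you wrote down; your Hermite-tensor packaging is just a cleaner way of saying the same thing (and you correctly note that the resulting expression is linear in the $M_j$'s, so plugging in empirical moments keeps the estimator unbiased).
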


Let $\eta \sim N(0,I)$ denote a Gaussian r.v. The expected value of the statistic $x^{\otimes \ell}$ 
\begin{align}
\mom_\ell:= \E [x^{\otimes \ell}] &= \sum_i w_i \E_\eta[(\mu_i+\eta)^{\otimes \ell}] = \frac{1}{k} \sum_{i=1}^k  \sum_{\substack{x_j \in \{\mu_i, \eta\} \\ \forall j \in [\ell]}} \E_\eta \big[ \bigotimes_{j=1}^\ell x_j \big]. \label{eq:iserlis1}
\end{align}

Now the first term in the inner expansion (where every $x_j= \mu_i$) is the one we are interested in, so we will try to ``subtract'' out the other terms using the first $(\ell-1)$ moments of the distribution. 
Let us consider the case when $\ell=3$ to gain more intuition. As odd moments of $\eta$ are zero, we have
\begin{align*}
\mom_3 := \E [x^{\otimes 3}] &= \frac{1}{k} \sum_{i=1}^k  \Big( \mu_i^{\otimes 3} + \E_\eta[\mu_i \otimes  \eta \otimes \eta] + \E_\eta[\eta \otimes  \eta \otimes \mu_i] + \E_\eta[\eta \otimes \mu_i \otimes \eta] \Big)\\
 &= T_3 + \Big(\mom_1 \otimes  I + \text{ two other known terms} \Big).
\end{align*}
Hence, we can obtain the required tensor $T_3$ using a combination of $\mom_3$ and $\mom_1$; the corresponding statistic is $x^{\otimes 3} - (x \otimes I + \text{two other known terms})$. 
We can use a similar inductive approach for obtaining $T_\ell$ 
(or use Iserlis identity that expresses higher moments of a Gaussian in terms of the mean and covariance)\footnote{An alternate trick to obtain a statistic $T_\ell$ that only loses constant factors in the dimension involves looking at an off-diagonal block of the tensor $\mom_\ell$ after partitioning the $n$ co-ordinates into $\ell$ equal sized blocks.}.  
\paragraph{Case study 2: Learning Hidden Markov Models (HMMs).}\label{sec:hmm}
Our next example is HMMs which are extensively used for data with a sequential structure. 
In an HMM, there 
is a hidden state sequence $Z_1,Z_2,\dots,Z_m$ taking values in $[k]$, that forms a stationary Markov chain $Z_1 \rightarrow Z_2 \rightarrow \dots \rightarrow Z_m$ with transition matrix $P$ and initial distribution $w=\{w_j\}_{j \in [k]}$ (assumed to be the stationary distribution). The observation $X_t$ 
is represented by a vector in $x^{(t)} \in \R^n$. Given the state $Z_t$ at time $t$, $X_t$ is conditionally independent of all other observations and states. The observation matrix is denoted by $\mathcal{O} \in \R^{n\times k}$; the columns of $\mathcal{O}$ represent the means of the observation $X_t \in \R^n$ conditioned on the hidden state $Z_t$ i.e., $\E[X_t \vert Z_t=i] = \mathcal{O}_i$, where $\mathcal{O}_i$ represents the $i$th column of $\mathcal{O}$. We also assume that $X_t$ satisfies strong enough concentration bounds to use empirical estimates. 
The parameters are $P, \Ocal, w$.

We now define appropriate statistics following ~\cite{AMR09}. 
Let $m = 2\ell + 1$ for some $\ell$ to be chosen later. The statistic $\mathcal{T}$ is $X_{2\ell+1} \otimes X_{2\ell} \otimes \dots \otimes X_1$. We can also view this $(2\ell+1)$ moment tensor as a 3-tensor of shape $n^\ell \times n \times n^\ell$. The first mode corresponds to $X_\ell  \otimes X_{\ell-1} \otimes \ldots \otimes X_1$, the second mode is $X_{\ell + 1}$ and the third mode is $X_{\ell + 2} \otimes X_{\ell+3} \otimes \ldots X_{2\ell + 1}$. Why does it have a low-rank decomposition? We can think of the hidden state $Z_{\ell + 1}$ as the latent variable which takes $k$ possible values. 

\begin{proposition}\label{prop:hmm:moments}
The above statistic $\mathcal{T}$ has a low-rank decomposition $\sum_{i=1}^k A_i \otimes B_i \otimes C_i$ with factor matrices $A \in \R^{n^{\ell} \times k}$, $B \in \R^{n \times k}$, and $C \in \R^{n^{\ell} \times k}$ s.t. $\forall i \in [k]$,
$$A_i=\E[ \otimes_{j=\ell}^{1} X_j | Z_{\ell + 1}=i],~~ B_i= \E[X_{\ell+1} | Z_{\ell+1}=i] ,\text{ and } C_i=\E[ \otimes_{j=\ell+2}^{2\ell+1}X_j | Z_{\ell+1}=i].$$
Moreover, $O,P$ and $w$ can be recovered from $A,B,C$. 
\end{proposition}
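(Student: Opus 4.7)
The plan is to derive the decomposition by conditioning on the middle hidden state $Z_{\ell+1}$ and invoking the Markov property of the chain together with the HMM observation model, and then to argue that $\Ocal$, $P$, and $w$ can be extracted from $(A,B,C)$ by linear algebra, assuming $\Ocal$ has full column rank.

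First, by the tower rule applied to $Z_{\ell+1}$ together with stationarity,
$$\mathcal{T} = \sum_{i=1}^{k} w_i \, \E\big[X_{2\ell+1} \otimes \dots \otimes X_1 \,\big|\, Z_{\ell+1} = i\big].$$
The key structural observation is that given $Z_{\ell+1}$, the past hidden states $(Z_1,\dots,Z_\ell)$ are conditionally independent of the future hidden states $(Z_{\ell+2},\dots,Z_{2\ell+1})$ by the Markov property, and each observation $X_t$ is conditionally independent of everything else given its own hidden state $Z_t$. Consequently the three blocks (past observations, the present observation $X_{\ell+1}$, future observations) are mutually conditionally independent given $Z_{\ell+1}=i$, so the conditional expectation factors as a tensor product. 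Under the 3-tensor reshape with modes (past, present, future), the $i$-th summand is therefore $A_i \otimes B_i \otimes C_i$ with the stated conditional-expectation factors, after absorbing the weight $w_i$ into one of the factors (say $B_i$; this does not change the rank or factor directions).

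For parameter recovery, $B_i = \E[X_{\ell+1} \mid Z_{\ell+1}=i] = \Ocal_i$ (up to the absorbed $w_i$), so the columns of $\Ocal$ are read off from $B$ after rescaling. Next, each $C_i$ is obtained by marginalizing over the future hidden states $Z_{\ell+2}, \dots, Z_{2\ell+1}$, and a direct expansion shows that it is a linear combination of tensor products of columns of $\Ocal$ with coefficients that are products of entries of $P$. Concretely, unfolding $C$ along its first mode expresses it as $\Ocal \cdot M$ for a matrix $M$ built from $P$ (for $\ell=1$ this is literally $C = \Ocal P^\transpose$); left-multiplying by $\Ocal^{\dagger}$ peels off one factor of $\Ocal$ and one application of $P$ at a time, from which $P$ can be recovered. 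The symmetric argument on $A$ via the time-reversed chain gives the same information (and lets us cross-check). Finally, $w$ is recovered as the stationary distribution of $P$, or equivalently from the first-moment identity $\E[X_{\ell+1}] = \Ocal w$.

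The main obstacle is really the non-degeneracy requirement that $\Ocal$ have full column rank: without it, the pseudo-inverse step above does not uniquely invert the conditional expectations, and the HMM parameters are genuinely not identifiable from these moments. This is therefore an intrinsic restriction rather than an artifact of the proof. Under the full column rank assumption on $\Ocal$, the recovery procedure reduces to explicit linear-algebraic inversions, so once the conditional independence factorization in the first step is in hand, the rest of the argument is book-keeping.
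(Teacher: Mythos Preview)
The paper does not actually prove this proposition; it gives the explicit $\ell=1$ formulas $A=\Ocal P'$, $B=\Ocal$, $C=\Ocal P$ (with $P'=\diag(w)P^{\transpose}\diag(w)^{-1}$) and leaves the general case as Exercise~\ref{exer:hmm}, citing \cite{AMR09}. Your argument---condition on $Z_{\ell+1}$, use the Markov property to split past/present/future into conditionally independent blocks so the expectation factors, then peel off $\Ocal$ and $P$ via pseudoinverses under a full-rank assumption on $\Ocal$---is exactly the intended one and matches the paper's $\ell=1$ formulas. Two small bookkeeping remarks: (i) with the paper's convention you get $C=\Ocal P$ rather than $C=\Ocal P^{\transpose}$ (this is just the orientation of the transition matrix); (ii) as you observed, the stationary weights $w_i$ multiply each term of the decomposition and must be absorbed into one factor, which the statement elides---your handling of this is correct.
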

For $\ell=1$, $C=OP,B=O,A=OP'$ where $P'=\text{diag}(w) P^T \text{diag}(w)^{-1}$ is the reverse transition matrix. 
Tensor decompositions will allow for efficient recovery of $O,P,w$ in Theorem~\ref{thm:hmm:fullrank} and Section~\ref{sec:overcomplete:applications}.  
We leave the proof of Proposition~\ref{prop:hmm:moments} as Exercise~\ref{exer:hmm}. See \cite{AMR09} for more details.  

\section{Efficient Algorithms in the Full Rank Setting}
\subsection{Simultaneous Diagonization (Jennrich's algorithm)}
\label{sec:jennrich}

We now study Jennrich's algorithm (first described in \cite{Har70}), that gives theoretical guarantees for finding decompositions of third-order tensors under a natural non-degeneracy condition called the full-rank setting. 
Moreover this algorithm also has reasonable robustness properties, and can be used as a building block to handle more general settings and for many machine learning applications.
Consider a third-order tensor $T \in \R^{n \times m \times p}$ that has a decomposition of rank $k$:
$$  T=\sum_{i=1}^{k} u_i \otimes v_i \otimes w_i.$$
Our algorithmic goal is to recover the unknown factors $U,V,W$. 
Of course, we only hope to recover the factors up to some trivial scaling of vectors (within a rank-one term) and permuting terms. 
Note that our algorithmic goal here is much stronger than usual. 
This is possible because of uniqueness of tensor decompositions -- in fact, the proof of correctness of the algorithm also proves uniqueness!   

The algorithm considers two matrices $M_a, M_b$ that are formed by taking random linear combinations of the slices of the tensor as shown in Figure~\ref{fig:jennrich}. We will show later in \eqref{eq:jennrich:randcombs} that $M_a, M_b$ both have low-rank decompositions in terms of the unknown factors $\set{u_i, v_i}$. Hence, the algorithm reduces the problem of decomposing one third-order tensor into the problem of obtaining a ``simultaneous'' decomposition of the two matrices $M_a, M_b$ (this is also called simultaneous diagonalization).   


\begin{figure}[ht]
  \begin{minipage}[c]{0.35\textwidth}
    \includegraphics[width=\textwidth]{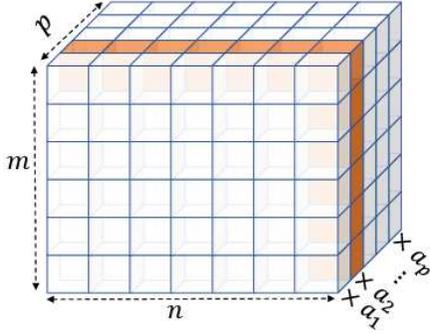}
  \end{minipage}\hfill
  \begin{minipage}[c]{0.65\textwidth}
    \caption{ shows a tensor $T$, and a particular matrix slice highlighted in orange (corresponding to $i_3=2$). The linear combination of the slices $T(\cdot,\cdot,a)$ takes a linear combination of these matrix slices weighted according to $a \in \R^p$. The algorithm considers two matrices $M_a=T(\cdot,\cdot,a), M_b=T(\cdot,\cdot,b)$ for two randomly chosen vectors $a,b \in \R^p$. 
    } \label{fig:jennrich}
  \end{minipage}
\end{figure}


In the following algorithm, $M^{\dagger}$ refers to the pseudoinverse or the Moore-Penrose inverse of $M$ (if a rank-$k$ matrix $M$ has a singular value decomposition $M=U \Sigma V^T$ where $\Sigma$ is a $k \times k$ diagonal matrix, then $M^{\dagger}=V \Sigma^{-1} U^T$).

\begin{algorithm}[!h]

\noindent {\bf Input}: Tensor $T \in \R^{n \times m \times p}$.
\begin{enumerate}
\item Draw $a, b \sim N(0,\tfrac{1}{p})^p \in \R^p$ independently. Set $M_a = T(\cdot, \cdot, a), M_b = T(\cdot, \cdot, b)$.
\item Set $\set{u_i: i \in [k]}$ to be the eigenvectors corresponding to the $k$ largest (in magnitude) eigenvalues of $M_a (M_b)^{\dagger}$. 
Similarly let $\set{v_i: i \in [k]}$ be the eigenvectors corresponding to the $k$ largest (in magnitude) eigenvalues of $\left( (M_b)^{\dagger} M_a\right)^T$. 
\item Pair up $u_i, v_i$ if their corresponding eigenvalues are reciprocals (approximately).
\item Solve the linear system $T = \sum_{i = 1}^k u_i \otimes v_i \otimes w_i$ for the vectors $w_i$. \label{step:linearsystem}
\item Return factor matrices $U \in \R^{n \times k},V \in \R^{m \times k},W \in \R^{p \times k}$.
\end{enumerate}

\caption{Jennrich's Algorithm}\label{alg:decomp}
\end{algorithm}

%
%
%
%
%
%
%

In what follows, $\norm{T}_F$ denotes the Frobenius norm of the tensor ($\norm{T}_F^2$ is the sum of the squares of all the entries), and the condition number $\kappa$ of matrix $U \in \R^{n \times k}$ is given by $\kappa(U)= \sigma_1(U) / \sigma_k(U)$, where $\sigma_1 \ge \sigma_2 \ge \dots \ge \sigma_k \ge 0$ are the singular values. 
The guarantees (in terms of the error tolerance) will be inverse polynomial in the condition number $\kappa$, which is finite only if the matrix has rank $k$ (full rank). 

 \begin{theorem} \label{thm:jennrich}
Suppose we are given tensor $\widetilde{T}=T + E \in \R^{m \times n \times p}$, where $T$ has a decomposition
$T = \sum_{i = 1}^k u_i \otimes v_i \otimes w_i$ satisfying the following conditions:
\begin{enumerate}
\item Matrices $U=(u_i: i \in [k]), V=(v_i: i \in [k])$ have condition number at most $\kappa$, 
\item For all $i \neq j$, $\|\frac{w_i}{\|w_i\|} - \frac{w_j}{\|w_j\|} \|_2 \geq \delta$.
\item Each entry of $E$ is bounded by $\| T \|_F \cdot   \epsilon /\mbox{poly}(\kappa, \max\set{n,m,p}, \tfrac{1}{\delta})$. 
\end{enumerate}
Then the Algorithm~\ref{alg:decomp} on input $\widetilde{T}$ runs in polynomial time and returns a decomposition $\set{(\widetilde{u}_i, \widetilde{v}_i, \widetilde{w}_i): i \in [k]}$ s.t. there is a permutation $\pi:[k] \to [k]$ with
$$ \forall i \in [k],~~ \|\widetilde{u}_i \otimes \widetilde{v}_i \otimes \widetilde{w}_i - u_{\pi(i)} \otimes v_{\pi(i)} \otimes w_{\pi(i)} \|_F  \le \epsilon \norm{T}_F.$$
\end{theorem}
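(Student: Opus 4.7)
My plan is to first analyze the noiseless case ($E=0$) and show that the algorithm exactly recovers the decomposition, and then control how each step propagates perturbations. In the noiseless case, the key identity is that for any vector $c \in \R^p$, the slice contraction factors as
\begin{equation*}
T(\cdot,\cdot,c) \;=\; \sum_{i=1}^k \iprod{w_i,c}\, u_i v_i^\transpose \;=\; U\, D_c\, V^\transpose, \qquad D_c := \diag\bigparen{\iprod{w_i,c}}_{i=1}^k.
\end{equation*}
Thus $M_a = UD_aV^\transpose$ and $M_b = UD_bV^\transpose$. Since $U$ and $V$ have rank $k$, one computes $(M_b)^\dagger = (V^\transpose)^\dagger D_b^{-1} U^\dagger$ and
\begin{equation*}
M_a (M_b)^\dagger \;=\; U\, D_a D_b^{-1}\, U^\dagger,
\end{equation*}
so the nonzero eigenvalues of $M_a(M_b)^\dagger$ are $\lambda_i = \iprod{w_i,a}/\iprod{w_i,b}$ with corresponding eigenvectors $u_i$ (up to scalar). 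The analogous calculation for $\paren{(M_b)^\dagger M_a}^\transpose$ yields the $v_i$ with the \emph{same} eigenvalues $\lambda_i$, which is why Step~3 of the algorithm can correctly pair $u_i$ with $v_i$. Once the $u_i, v_i$ are recovered, the $w_i$ are uniquely determined by the linear system in Step~\ref{step:linearsystem} whenever the Khatri--Rao product $U \odot V \in \R^{nm \times k}$ has full column rank; this in turn follows from $U,V$ being well-conditioned.

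Next I would establish the two probabilistic ingredients required for the eigendecomposition to be well-posed. First, an \emph{eigenvalue gap}: for $i \ne j$,
\begin{equation*}
\lambda_i - \lambda_j \;=\; \frac{\iprod{w_i,a}\iprod{w_j,b} - \iprod{w_j,a}\iprod{w_i,b}}{\iprod{w_i,b}\iprod{w_j,b}},
\end{equation*}
so I need the numerator to be bounded away from $0$. Writing $a,b \sim N(0,\tfrac1p I_p)$, the numerator is (in distribution) a product of two independent centered Gaussians on the 2D subspace $\mathrm{span}(w_i,w_j)$, scaled by $\snorm{w_i}\snorm{w_j}\sin^2\theta_{ij}$ where $\theta_{ij}$ is the angle between $w_i$ and $w_j$. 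The hypothesis $\norm{\hat w_i - \hat w_j} \ge \delta$ gives $\sin\theta_{ij} \gtrsim \delta$, so by Gaussian anti-concentration and a union bound over the $\binom{k}{2}$ pairs I obtain $|\lambda_i - \lambda_j| \ge \mathrm{poly}(\delta)/\poly(k,p)$ and simultaneously $|\iprod{w_i,b}| \ge 1/\poly(k,p)$ for every $i$, with high probability over $a,b$. In particular the $\lambda_i$ are distinct, so the eigendecomposition is unique and the $u_i$ (and $v_i$) are recovered exactly in the noiseless case.

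For the robust case with $\widetilde T = T + E$, I would track perturbations through each step. The matrices $\widetilde M_a, \widetilde M_b$ differ from $M_a, M_b$ by at most $\norm{E}_F$, which is at most $\norm{T}_F \epsilon/\poly(\kappa,n,m,p,1/\delta)$ by hypothesis. A standard pseudoinverse stability bound (using that $\sigma_k(M_b) \ge \sigma_k(U)\sigma_k(V) \min_i |\iprod{w_i,b}| \ge 1/\poly(\kappa,k,p)$) shows $\norm{\widetilde M_b^\dagger - M_b^\dagger}$ is small, hence $\norm{\widetilde M_a \widetilde M_b^\dagger - M_a M_b^\dagger}$ is small. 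Now comes the main technical obstacle: eigenvector perturbation for a nonsymmetric matrix. Because $M_a M_b^\dagger$ diagonalizes as $UD_aD_b^{-1}U^\dagger$ with eigenvector-matrix condition number $\le \kappa$ and eigenvalue gap $\gtrsim \poly(\delta)/\poly(k,p)$, a Bauer--Fike / sin$\Theta$ style bound (applied e.g.~on the image under $U,U^\dagger$) yields that the $k$ recovered eigenvectors $\widetilde u_i$ lie within $\epsilon'$ of some permutation of the true $u_i$ (after sign/scale normalization), for $\epsilon'$ as small as desired by absorbing $\poly(\kappa,n,m,p,1/\delta)$ factors into the noise tolerance. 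The same bound applies to $\widetilde v_i$, and the paired-eigenvalue matching from Step 3 is robust provided the gap is larger than the perturbation in the eigenvalues (again given by Bauer--Fike). Finally, the linear system for $\widetilde w_i$ in Step~\ref{step:linearsystem} is stable because the Khatri--Rao matrix $\widetilde U \odot \widetilde V$ is close to $U \odot V$, whose smallest singular value is bounded below in terms of $\sigma_k(U),\sigma_k(V)$ and $\delta$. Summing the per-factor errors and scaling appropriately yields the claimed bound $\norm{\widetilde u_i\otimes \widetilde v_i \otimes \widetilde w_i - u_{\pi(i)}\otimes v_{\pi(i)}\otimes w_{\pi(i)}}_F \le \epsilon \norm{T}_F$. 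I expect the nonsymmetric eigenvector perturbation step to be the hardest piece to make quantitative, since one must carefully control both the diagonalizer's condition number and the eigengap simultaneously.
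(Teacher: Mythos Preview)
Your proposal is correct and follows essentially the same route as the paper: the noiseless analysis via $M_a=UD_aV^\transpose$, $M_b=UD_bV^\transpose$ and $M_a M_b^\dagger=U D_aD_b^{-1}U^\dagger$ is identical, your anti-concentration argument for the eigenvalue gap is exactly the content of the paper's Lemma~\ref{lem:separation}, and your robust analysis via pseudoinverse stability plus nonsymmetric eigenvector perturbation (with the diagonalizer condition number $\kappa$ and the gap controlling the error) matches the paper's sketch, which likewise flags this step as the main technical difficulty and defers the details to the Stewart--Sun perturbation theory and to \cite{GVX,BCMV}. One minor discrepancy: the algorithm as written pairs $u_i,v_i$ via \emph{reciprocal} eigenvalues, whereas your computation (and the paper's own displayed identity $M_a^\transpose(M_b^\transpose)^\dagger=VD_aD_b^\dagger V^\dagger$) shows the eigenvalues are in fact the \emph{same}; your version is internally consistent.
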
 
We start with a simple claim that leverages the randomness in the Gaussian linear combinations $a,b$ (in fact, this is the only step of the argument that uses the randomization). Let $D_a := \mbox{diag}(a^T w_1, a^T w_2, \dots, a^T w_k)$ and $D_b:=\mbox{diag}(b^T w_1, b^T w_2, \dots, b^T w_k)$.

\begin{lemma}\label{lem:separation}
With high probability over the randomness in $a,b$, the diagonal entries of $D_a D_b^{-1}$ are separated from each other, and from $0$, i.e., 
$$ \forall i \in [k]~~ \left| \frac{\iprod{w_i, a}}{\iprod{w_i,b}}\right| > \frac{1}{\text{poly}(p)} , \text{ and }  \forall i \ne j~~ \left| \frac{\iprod{w_i, a}}{\iprod{w_i,b}} - \frac{\iprod{w_j, a}}{\iprod{w_j,b}} \right| > \frac{1}{\text{poly}(p)}.$$
\end{lemma}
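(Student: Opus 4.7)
The plan is to combine standard Gaussian concentration/anti-concentration with a simple algebraic identity for the difference of the two ratios, followed by a union bound over the $O(k^2)$ pairs of indices.

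First, for each fixed $i$ both $\langle w_i, a\rangle$ and $\langle w_i, b\rangle$ are mean-zero scalar Gaussians of standard deviation $\norm{w_i}/\sqrt{p}$. Gaussian anti-concentration combined with a standard Gaussian tail bound gives that $|\langle w_i, a\rangle|$ and $|\langle w_i, b\rangle|$ both lie in $[\,\norm{w_i}/\poly(p),\; \norm{w_i}\cdot O(\sqrt{\log k}/\sqrt{p})\,]$ with probability $1-1/\poly(p)$; a union bound over $i\in[k]$ and taking ratios already yields $|\langle w_i, a\rangle/\langle w_i, b\rangle| \ge 1/\poly(p)$, handling the away-from-zero part.

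For the pairwise separation I would use the identity
\[
\frac{\langle w_i, a\rangle}{\langle w_i, b\rangle} - \frac{\langle w_j, a\rangle}{\langle w_j, b\rangle} \;=\; \frac{\langle \alpha_{ij}, a\rangle}{\langle w_i, b\rangle\,\langle w_j, b\rangle}, \qquad \alpha_{ij} := \langle w_j, b\rangle\, w_i - \langle w_i, b\rangle\, w_j,
\]
so that the denominator is controlled by the previous step (applied to $b$). To bound the numerator, I condition on $b$: then $\langle \alpha_{ij}, a\rangle$ is a mean-zero Gaussian with variance $\norm{\alpha_{ij}}^2/p$. Writing $W_{ij} := [w_i \mid w_j] \in \R^{p\times 2}$ and letting $J$ be the $2\times 2$ rotation-by-$\pi/2$ matrix, one checks $\alpha_{ij} = W_{ij} J W_{ij}^{\transpose} b$; then the identity $J^{\transpose} M J = \det(M)\, M^{-1}$ for any invertible $2\times 2$ matrix $M$ yields
\[
\norm{\alpha_{ij}}^2 \;=\; \det(W_{ij}^{\transpose} W_{ij}) \cdot \norm{\Pi_{ij}\, b}^2 \;=\; \norm{w_i}^2 \norm{w_j}^2\, (1-\rho_{ij}^2)\cdot \norm{\Pi_{ij}\, b}^2,
\]
where $\rho_{ij} := \langle w_i, w_j\rangle/(\norm{w_i}\norm{w_j})$ and $\Pi_{ij}$ is the orthogonal projection onto $\mathrm{span}(w_i, w_j)$. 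The factor $\norm{\Pi_{ij} b}^2$ is a scaled $\chi^2_2$ and is $\ge 1/\poly(p)$ with high probability, while hypothesis (2) of the theorem gives $1-\rho_{ij}^2 \ge \poly(\delta)$ (see below). Thus $\norm{\alpha_{ij}} \ge \norm{w_i}\norm{w_j}/\poly(p, 1/\delta)$, and Gaussian anti-concentration applied to $\langle \alpha_{ij}, a\rangle$ forces $|\langle \alpha_{ij}, a\rangle| \ge 1/\poly(p, 1/\delta)$ with probability $1-1/\poly(p)$. Dividing by the controlled denominator and union-bounding over the $O(k^2)$ pairs completes the proof.

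The main obstacle I expect is that last hypothesis-massage. The stated condition $\norm{w_i/\norm{w_i} - w_j/\norm{w_j}} \ge \delta$ only directly yields $1-\rho_{ij} \ge \delta^2/2$, so one must separately exclude the near-antipodal case $\rho_{ij}\approx -1$ in order to conclude $1-\rho_{ij}^2 \ge \poly(\delta)$. This is essentially cosmetic: since the rank-one term $u_i\otimes v_i\otimes w_i$ is unchanged under simultaneously negating $w_i$ and, say, $u_i$, we may orient the $w_i$'s to lie in a common open halfspace, which guarantees that $1+\rho_{ij}$ is also bounded away from $0$. Everything else --- Gaussian anti/concentration, $\chi^2_2$ anti-concentration, and the union bound --- is routine.
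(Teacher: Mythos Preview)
Your approach is exactly what the paper intends by ``simple anti-concentration of Gaussians and a union bound'' (that sentence is the entirety of the paper's proof), and your algebraic identity for the difference of ratios is a clean way to reduce the pairwise-separation claim to a single scalar Gaussian anti-concentration after conditioning on $b$.

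Two small caveats. First, the identity $J^{\transpose} M J = \det(M)\, M^{-1}$ is false for general invertible $2\times 2$ matrices (for general $M$ one gets the adjugate of $M^{\transpose}$, not of $M$); it holds here only because $M = W_{ij}^{\transpose} W_{ij}$ is symmetric, so just restrict the claim accordingly. Second, your proposed fix for the near-antipodal case does not quite work: placing all the $w_i$ in a common open halfspace gives $\rho_{ij}>-1$ but no quantitative lower bound on $1+\rho_{ij}$ (two unit vectors can be nearly antipodal yet both have positive inner product with a third). This is really a defect in the theorem's hypothesis~(2) rather than in your argument --- if $\hat w_j = -\hat w_i$ the two diagonal entries of $D_a D_b^{-1}$ genuinely coincide --- so the cleanest repair is to strengthen hypothesis~(2) to $\min\bigl(\|\hat w_i - \hat w_j\|,\ \|\hat w_i + \hat w_j\|\bigr) \ge \delta$, which directly yields $1-\rho_{ij}^2 \ge \delta^2(1-\delta^2/4)$.
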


The proof just uses simple anti-concentration of Gaussians and a union bound.  We now proceed to the proof of Theorem~\ref{thm:jennrich}.

\begin{proof} [Proof of Theorem~\ref{thm:jennrich}]
We first prove that when $E=0$, the above algorithm recovers the decomposition exactly. The robust guarantees  when $E \ne 0$ uses perturbation bounds for eigenvalues and eigenvectors.

\vspace{5pt}
\noindent {\em No noise setting ($E=0$).} Recall that $T$ has a rank $k$ decomposition in terms of the factors $U, V, W$. Hence 
\begin{align}
M_a &= \sum_{i=1}^k \iprod{a, w_i} u_i v_i^T = U D_a V^T,~~\text{ and similarly } M_b = U D_b V^T.
\label{eq:jennrich:randcombs}
\end{align}
Moreover $U, V$ are full rank by assumption, and diagonal matrices $D_a, D_b$ have full column rank of $k$ with high probability (Lemma~\ref{lem:separation}). Hence
\begin{align*}
M_a (M_b)^{\dagger} = U D_a V^T (V^T)^{\dagger} D_b^{\dagger} U^{\dagger} = U D_a D_b^{\dagger} U^{\dagger}\\
\text{ and } M_a^T (M_b^T)^{\dagger}= V D_a D_b^{\dagger} V^{\dagger}.
\end{align*}

Moreover from Lemma~\ref{lem:separation}, the entries of $D_a D_b^{\dagger}$ are 
 distinct and non-zero with high probability.   
Hence the column vectors of $U$ are eigenvectors of $M_a (M_b)^{\dagger}$ with eigenvalues $( \iprod{w_i,a}/\iprod{w_i, b}: i \in [k])$. Similarly, the columns of $V$ are eigenvectors of $(M_b^{\dagger} M_a)^T$ with eigenvalues $( \iprod{w_i,b}/\iprod{w_i, a}: i \in [k])$. 
Hence, the eigendecompositions of $M_a M_b^{\dagger}$ and $(M_b^{\dagger} M_a)^T$ are unique (up to scaling of the eigenvectors) with the corresponding eigenvalues being reciprocals of each other.

Finally, once we know $\set{u_i, v_i: i \in [k]}$ (up to scaling), step~\ref{step:linearsystem} solves a linear system in the unknowns $\set{w_i: i \in [k]}$.  A simple claim shows that the corresponding co-efficient matrix given by $\set{u_i v_i^T: i \in [k]}$ has ``full'' rank i.e., rank of $k$. Hence the linear system has a unique solution $W$ and algorithm recovers the decomposition.  


\vspace{5pt} 
\noindent{\em Robust guarantees ($E$ is non-zero).}
When $E \ne 0$, we will need to analyze how much the eigenvectors of $M_1:=M_a M_b^\dagger$ can change, under the (worst-case) perturbation $E$. The proof uses perturbation bounds for eigenvectors of matrices (which are much more brittle than eigenvalues) to carry out this analysis. 
We now give a a high-level description of the approach, while pointing out a couple of subtle issues and difficulties. The primary issue comes from the fact that the matrix $M_1=M_a M_b^{\dagger}$ is not a symmetric matrix (for which one can use the Davis-Kahan theorem for singular vectors). In our case, while we know that $M_1$ is diagonalizable, there is no such guarantee about $M'_1= M_a M_b^{\dagger}+E'$, where $E'$ is the error matrix that arises at this step due to $E$. The key property that helps us here is Lemma~\ref{lem:separation}, which ensures that all of the non-zero eigenvalues of $M_1$ are separated. In this case, we know the matrix $M'_1$ is also diagonalizable using a standard strengthening of the Gershgorin disc theorem. 
One can then use the separation in the eigenvalues of $M_1$ to argue that the eigenvectors of $M_1, M'_1$ are close, using ideas from the perturbation theory of invariant subspaces (see Chapter 5 of\cite{StewartSun}). See also \cite{GVX, BCMV} for a self-contained proof of Theorem~\ref{thm:jennrich}. 
\end{proof}

\subsection{Implications in learning applications}\label{sec:applications:fullrank}

These efficient algorithms that (uniquely) recover the factors of a low-rank tensor decomposition give polynomial time guarantees for learning non-degenerate instances of several latent variable models using the general recipe given in Section~\ref{sec:recipe}. 
This approach has been used for several problems including but not limited to, parameter estimation of  hidden markov models, phylogeny models, mixtures of Gaussians, independent component analysis, topic models, mixed community membership models, ranking models, crowdsourcing models, and even certain neural networks (see ~\cite{AGHKT12,Moitrabook} for excellent expositions on this topic) . 

For illustration, we give the implications for our two case studies. 
For Gaussian mixtures, the $k$ means are assumed to be linear independent (hence $n \ge k$). We apply Theorem~\ref{thm:jennrich} to the $\ell=3$ order tensor obtained from Proposition~\ref{prop:gmm:moments}. 
\begin{theorem}
\cite{HK13} \label{thm:gmm:fullrank} Given samples from a mixture of $k$ spherical Gaussians, there is an algorithm that learns the parameters up to $\epsilon$ error in $\poly(n,1/\epsilon, 1/\sigma_{k}(M))$ time (and samples), where $M$ is the $n \times k$ matrix of means. 
\end{theorem}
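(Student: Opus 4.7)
The plan is to instantiate the method-of-moments recipe from Section~\ref{sec:recipe} with the $\ell=3$ moment tensor, and then apply Jennrich's algorithm (Theorem~\ref{thm:jennrich}). Using Proposition~\ref{prop:gmm:moments} with $\ell=3$, we obtain an efficiently computable statistic $\calT_3$ (a degree-$3$ polynomial in the sample $x$, combined with known lower-order moments of the Gaussian) whose expectation is the rank-$k$ symmetric tensor $T_3=\sum_{i=1}^k \mu_i^{\otimes 3}$. Drawing $N$ i.i.d.\ samples from the mixture and averaging yields $\widetilde{T}_3=T_3+E$, where standard concentration bounds for polynomials in (sub-)Gaussian random variables together with a union bound over the $n^3$ coordinates give $\norm{E}_F \le \poly(n,\max_i \norm{\mu_i}_2)/\sqrt{N}$ with high probability.

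To invoke Theorem~\ref{thm:jennrich} on $\widetilde{T}_3$, I would check its three hypotheses. All three factor matrices coincide with $M=(\mu_1,\dots,\mu_k)$, so the full-rank condition on the first two modes follows from the assumed linear independence of the means, with condition number $\kappa(M)=\sigma_1(M)/\sigma_k(M)$. The substantive remaining condition is the pairwise separation $\delta$ of the normalized columns $\mu_i/\norm{\mu_i}_2$ in the third mode. I would argue $\delta \ge \sigma_k(M)/\max_\ell\norm{\mu_\ell}_2$ as follows: if $\norm{\mu_i/\norm{\mu_i}_2 - \mu_j/\norm{\mu_j}_2}_2<\delta$, then the submatrix $[\mu_i,\mu_j]$ admits a unit-coefficient linear combination of norm at most $\max_\ell\norm{\mu_\ell}_2\cdot \delta$, so its smallest singular value (which upper bounds $\sigma_k(M)$) is at most $\max_\ell\norm{\mu_\ell}_2\cdot \delta$. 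With this in hand, setting $N$ polynomial in $n,1/\epsilon,1/\sigma_k(M)$ (and the implicit bound on $\max_i \norm{\mu_i}_2$) drives $\norm{E}_F$ below the noise tolerance demanded by Theorem~\ref{thm:jennrich} for target accuracy $\epsilon$.

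Jennrich's algorithm then returns triples $(\widetilde u_i,\widetilde v_i,\widetilde w_i)$ and a permutation $\pi$ with $\norm{\widetilde u_i\otimes \widetilde v_i\otimes \widetilde w_i - \mu_{\pi(i)}^{\otimes 3}}_F \le \epsilon\norm{T_3}_F$. A short post-processing step extracts each mean vector from its rank-one approximation: for instance, set $\widehat{\mu}_{\pi(i)}$ proportional to $\widetilde u_i$ with magnitude fixed by $\norm{\widetilde v_i}_2$ and $\norm{\widetilde w_i}_2$, and use one global sign check to resolve the cube-root ambiguity (e.g.\ by comparing against the empirical first moment or by requiring consistency of the three recovered directions). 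In the uniform case ($w_j=1/k,\sigma_j=1$) assumed here, the remaining parameters are known; the general case follows by combining analogous higher-moment statistics as in~\cite{HK13}.

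The main obstacle I anticipate is cleanly tracking polynomial dependencies so that the final sample complexity reads $\poly(n,1/\epsilon,1/\sigma_k(M))$: both $\kappa(M)$ and $1/\delta$ must be bounded by $\poly(n,\max_i\norm{\mu_i}_2,1/\sigma_k(M))$, and the concentration bound for $\widetilde{T}_3$ involves third-order tails that scale with $\norm{\mu_i}_2^3$. Combining these against the noise requirement $\norm{T_3}_F\cdot\epsilon/\poly(\kappa,n,1/\delta)$ of Theorem~\ref{thm:jennrich} is routine but tedious; under the standard normalization $\max_i\norm{\mu_i}_2\le\poly(n)$, everything collapses into the claimed bound.
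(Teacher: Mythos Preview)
Your proposal is correct and follows exactly the approach the paper takes: the paper's entire argument for this theorem is the one-line remark ``We apply Theorem~\ref{thm:jennrich} to the $\ell=3$ order tensor obtained from Proposition~\ref{prop:gmm:moments},'' and you have faithfully fleshed out precisely that, including the verification of the separation condition on the third mode and the bookkeeping for the sample complexity.
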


For hidden markov models, we assume that the columns of the observation matrix $\Ocal$, and the transition matrix $P$ are linear independent (hence $n \ge k$). We apply Theorem~\ref{thm:jennrich} to the $\ell=3$ order tensor obtained from Proposition~\ref{prop:hmm:moments}. 

\begin{theorem}\cite{MR06,HKZ12}\label{thm:hmm:fullrank}
Given samples with $m = 3$ consecutive observations (corresponding to any fixed window of length $3$) from an HMM model as in Section~\ref{sec:hmm}, with $\sigma_k(\Ocal) \ge 1/\poly(n)$ and $\sigma_k(P) \ge 1/\poly(n)$, we can recover $P, \Ocal$ up to $\epsilon$ error in $\poly(n,1/\epsilon)$ time (and samples).
\end{theorem}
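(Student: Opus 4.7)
The plan is to follow the general method-of-moments recipe from Section~\ref{sec:recipe}: first form the statistic from Proposition~\ref{prop:hmm:moments} with $\ell = 1$, estimate it from samples, apply Jennrich's algorithm to recover the factor matrices, and finally read off the HMM parameters from those factors.

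First, I would set $\ell = 1$ and consider the 3-window statistic $\mathcal{T} = X_3 \otimes X_2 \otimes X_1$. By Proposition~\ref{prop:hmm:moments}, its expectation $T$ admits a rank-$k$ decomposition $\sum_{i=1}^{k} A_i \otimes B_i \otimes C_i$ with $B = \mathcal{O}$, $C = \mathcal{O} P$, and $A = \mathcal{O} P'$ where $P' = \mathrm{diag}(w)\, P^\top \mathrm{diag}(w)^{-1}$. Using standard empirical concentration for the (assumed bounded/sub-Gaussian) observations, $N$ samples yield an estimator $\widetilde{T}$ with $\|\widetilde{T} - T\|_F \le \poly(n)/\sqrt{N}$, so an entrywise error of any prescribed inverse polynomial is obtainable with $\poly(n,1/\epsilon)$ samples.

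Next, I would verify the hypotheses of Theorem~\ref{thm:jennrich} for this decomposition, placing $C$ in the third mode so that the condition-number hypothesis falls on $A$ and $B$ and the separation hypothesis falls on $C$. For $B = \mathcal{O}$ we have $\sigma_k(B) \ge 1/\poly(n)$ by assumption, and for $A = \mathcal{O} P'$ we use $\sigma_k(A) \ge \sigma_k(\mathcal{O})\,\sigma_k(P')$; since $P'$ is a $\mathrm{diag}(w)$-similarity of $P^\top$, its smallest singular value is bounded below by a polynomial in $\sigma_k(P)$ and $\min_i w_i$, both $1/\poly(n)$ under the assumptions (the latter because $w$ is the stationary distribution of a well-conditioned $P$). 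For the separation condition on the normalized columns of $C = \mathcal{O} P$, I would invoke $\sigma_k(C) \ge \sigma_k(\mathcal{O})\,\sigma_k(P) \ge 1/\poly(n)$: if two normalized columns of $C$ were $o(1/\poly(n))$-close, then the $n \times 2$ submatrix $[C_i \mid C_j]$ would have a tiny second singular value, contradicting the lower bound on $\sigma_k(C)$ once combined with the uniform $\poly(n)$ upper bound on $\|C_i\|$.

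With all three conditions verified, Theorem~\ref{thm:jennrich} applied to $\widetilde{T}$ returns triples $(\widetilde{A}_i,\widetilde{B}_i,\widetilde{C}_i)$ whose outer products approximate $A_{\pi(i)} \otimes B_{\pi(i)} \otimes C_{\pi(i)}$ to error $\epsilon'$ in Frobenius norm, provided the tensor error is a sufficiently small inverse polynomial in $n$ and $\epsilon'$. The final step extracts $\mathcal{O}$ and $P$. Jennrich recovers each rank-one term but leaves per-factor scalars $\alpha_i \beta_i \gamma_i = 1$ undetermined; I would resolve this using the model's built-in normalizations (e.g.\ that columns of the transition matrix sum to $1$, or that $B_i = \mathcal{O}_i$ is the conditional mean of the same observation distribution whose scale the model fixes), which pins down $\widetilde{\mathcal{O}}$ column-by-column. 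Then $P$ is obtained from the linear system $C = \mathcal{O} P$ via $\widetilde{P} = \widetilde{\mathcal{O}}^{\dagger}\, \widetilde{C}$, whose stability is again controlled by $\sigma_k(\mathcal{O}) \ge 1/\poly(n)$.

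The main obstacle I expect is the last step: Jennrich only guarantees recovery of rank-one terms, not individual factors, so propagating the error cleanly through the scaling fix and through the pseudoinverse $\mathcal{O}^\dagger$ without losing the polynomial dependence requires some care. Everything else is a bookkeeping exercise of chaining the sample-to-tensor, tensor-to-factor, and factor-to-parameter perturbation bounds, each polynomial, and then inverting the $\poly(n,1/\epsilon')$ relation in the hypothesis of Theorem~\ref{thm:jennrich} to set the required sample size and target tensor accuracy.
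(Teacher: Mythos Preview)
Your proposal is correct and follows exactly the approach the paper takes: the paper simply says ``we apply Theorem~\ref{thm:jennrich} to the $\ell=3$ order tensor obtained from Proposition~\ref{prop:hmm:moments}'' and cites \cite{MR06,HKZ12} for the details, so you have in fact supplied more of the argument (verifying the condition-number and separation hypotheses, and sketching the factor-to-parameter recovery) than the paper itself does. One small caveat: your claim that $\min_i w_i \ge 1/\poly(n)$ follows from $P$ being well-conditioned is not immediate and would need its own justification (or an explicit assumption), but the paper is equally informal on this point and defers such details to the cited references.
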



\section{Smoothed Analysis and the Overcomplete Setting} \label{sec:overcomplete}

The tensor decomposition algorithm we have seen in the previous section 
requires that the factor matrices have full column rank. 
As we have seen in Section~\ref{sec:applications:fullrank},  this gives polynomial time algorithms for learning a broad variety of latent variable models under the full-rank assumption. However, there are many applications in unsupervised learning where it is crucial that the hidden representation has much higher dimension (or number of factors $k$) than the dimension of the feature space $n$. Obtaining polynomial time guarantees for these problems using tensor decompositions requires polynomial time algorithmic guarantees when the rank is much larger than the dimension (in the full-rank setting $k \le n$, even when the $k$ factors are random or in general position in $\R^n$). 
Can we hope to obtain provable guarantees when the rank $k \gg n$?

This challenging setting when the rank is larger than the dimension is often referred to as the {\em overcomplete setting}. 
Tensor decompositions in the overcomplete setting is NP-hard in general. 
However for tensors of higher order, we will see in the rest of this section how Jennrich's algorithm can be adapted to get polynomial time guarantees even in very overcomplete settings for non-degenerate instances -- this will be formalized using smoothed analysis.  
 
 
\subsection{Smoothed analysis model.} \label{sec:smoothedmodel}

The smoothed analysis model for tensor decompositions models the situation when the factors in the decomposition are not worst-case.   

\begin{itemize}

\item An adversary chooses a tensor $T = \sum_{i=1}^k u^{(1)}_i \otimes u^{(2)}_i \otimes \dots \otimes u^{(\ell)}_i$.

\item Each vector $u^{(j)}_i$ is randomly ``$\rho$-perturbed'' using an independent Gaussian $N(0,\rho^2/n)^n$ with mean $0$ and variance $\rho^2/n$ in each direction \footnote{Many of the results in the section also hold for other forms of random perturbations, as long as the distribution satisfies a weak anti-concentration property, similar to the setting in Chapters 13-15; 
see \cite{ADMPSV18} for details.}.

\item Let $\widetilde{T} = \sum_{i=1}^k \tilde{u}^{(1)}_i \otimes \tilde{u}^{(2)}_i \otimes \dots \otimes \tilde{u}^{(\ell)}_i$. 
\item The input instance is $\hat{T}=\widetilde{T}+E$, where $E$ is some small potentially adversarial noise.
\end{itemize}

\noindent 
Our goal is to recover (approximately when $E \ne 0$) the $\ell$ sets of factors $U^{(1)}, \dots, U^{(\ell)}$ (up to rescaling and relabeling), where $U^{(j)}=(\widetilde{u}^{(j)}_i: i \in [k])$.  The parameter setting of interest is $\rho$ being at least some inverse polynomial in $n$, and the maximum entry of $E$ being smaller than some sufficiently small inverse polynomial $1/\poly(n,1/\rho)$. We will also assume that the Euclidean lengths of the factors $\set{u^{(j)}_i}$ is polynomially upper bounded. We remark that when $k \le n$ (as in the full-rank setting), Theorem~\ref{thm:jennrich} already gives smoothed polynomial time guarantees when $\varepsilon < \rho/\poly(n)$, since the condition number $\kappa \le \poly(n)/\rho$ with high probability.

{\em Remarks.} There is an alternate smoothed analysis model where the random perturbation is to each entry of the tensor itself, as opposed to randomly perturbing the factors of a decomposition. The two random perturbations are very different in flavor. When the whole tensor is randomly perturbed, we have $n^{\ell}$ ``bits'' of randomness, whereas when only the factors are perturbed we have $\ell n$ ``bits'' of randomness. 
On the other hand, 
the model where the whole tensor is randomly perturbed is unlikely to be easy from a computational standpoint, since this would likely imply randomized algorithms with good worst-case approximation guarantees. 

{Why do we study perturbations to the factors?} In most applications each factor represents a parameter e.g., a component mean in Gaussian mixture models. The intuition is that if these parameters of the model are not chosen in a worst-case configuration, we can potentially obtain vastly improved learning algorithms with such smoothed analysis guarantees. 

The smoothed analysis model can also be seen as the quantitative analog of ``genericity'' results that are inspired by results from algebraic geometry, particularly when we need robustness to noise. Results of this generic flavor give guarantees for all except a set of instances of zero measure.  
However, such results are far from being quantitative; as we will see later we typically need robustness to inverse polynomial error with high probability for polynomial time guarantees. 


\subsection{Adapting Jennrich's algorithm for overcomplete settings.}\label{sec:overcomplete:algo}

We will give an algorithm in the smoothed analysis setting for overcomplete tensor decompositions with polynomial time guarantees. In the following theorem, we consider the model in Section~\ref{sec:smoothedmodel} where the low-rank tensor $\tilde{T}=\sum_{i=1}^k \tilde{u}^{(1)}_i \otimes \tilde{u}^{(2)}_i \otimes \dots \otimes \tilde{u}^{(\ell)}_i$, and the factors $\set{\tilde{u}^{(j)}_i}$ are $\rho$-perturbations of the vectors $\set{u^{(j)}_i}$, which we will assume are bounded by some polynomial of $n$. The input tensor is $\tilde{T}+E$ where $E$ represents the adversarial noise.   
 
\begin{theorem}\label{thm:main}
Let $k \leq  n^{\lfloor \frac{\ell - 1}{2} \rfloor}/2$ for some constant  $\ell \in \mathbb{N}$, and $\varepsilon \in [0,1)$. There is an algorithm that takes as input a tensor $\hat{T}=\tilde{T}+E$ as described above, with every entry of $E$ being at most $\varepsilon/(n/\rho)^{O(\ell)}$ in magnitude, and runs in time $(n/\rho)^{O(\ell)}$ to recover all the rank one terms $\set{\otimes_{i=1}^\ell \tilde{u}^{(j)}_i : i \in [k]}$ up to an additive $\varepsilon$ error measured in Frobenius norm, with probability at least  $1 - \exp(- \Omega(n))$.   
\end{theorem}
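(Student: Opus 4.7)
The plan is to reduce this $\ell$-th order overcomplete problem to the third-order full-rank problem already handled by Jennrich's algorithm (Theorem~\ref{thm:jennrich}), via a flattening trick. Set $p = \lfloor (\ell-1)/2 \rfloor$ and partition the $\ell$ modes into three groups: the first $p$ modes, the last $p$ modes, and the remaining $\ell - 2p \in \{1,2\}$ middle modes. This reshapes the input $\hat T$ into a $3$-tensor $\hat T' \in \R^{n^p \times n^p \times n^{\ell-2p}}$ whose clean part is $\tilde T' = \sum_{i=1}^k A_i \otimes B_i \otimes C_i$ with
\[
A_i = \tilde u^{(1)}_i \otimes \cdots \otimes \tilde u^{(p)}_i,\qquad B_i = \tilde u^{(\ell-p+1)}_i \otimes \cdots \otimes \tilde u^{(\ell)}_i,\qquad C_i = \bigotimes_{j=p+1}^{\ell-p}\tilde u^{(j)}_i.
\]
Since $k \le n^p/2$, the factor matrices $A,B \in \R^{n^p\times k}$ are tall and can plausibly be well-conditioned; the $C_i$ sit in the ``separation'' mode of Jennrich's algorithm. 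If we can verify the three hypotheses of Theorem~\ref{thm:jennrich} on $\hat T'$, then its output $\{(\widetilde A_i,\widetilde B_i,\widetilde C_i)\}$ already gives the rank-one terms $\widetilde A_i\otimes\widetilde B_i\otimes\widetilde C_i$ that, after unflattening, are precisely the rank-one terms $\otimes_{j=1}^\ell \tilde u^{(j)}_{\pi(i)}$ asked for in the theorem, up to Frobenius error $\varepsilon \|\tilde T'\|_F$.

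The separation hypothesis and the noise hypothesis are essentially immediate: each $C_i$ contains at least one (and at most two) $\rho$-perturbed Gaussian factor, so standard Gaussian anti-concentration plus a union bound gives $\|C_i/\|C_i\| - C_j/\|C_j\|\|_2 \ge 1/\poly(n,1/\rho)$ for all $i\ne j$ with probability $1 - \exp(-\Omega(n))$; and the hypothesis $\|E\|_\infty \le \varepsilon/(n/\rho)^{O(\ell)}$ is exactly what Theorem~\ref{thm:jennrich} requires once the condition numbers (below) are polynomially bounded. The entire content of the proof is in the remaining hypothesis: showing that the Khatri--Rao (tensored) factor matrices $A$ and $B$ are polynomially well conditioned, namely
\[
\sigma_k(A),\ \sigma_k(B) \;\ge\; (\rho/n)^{O(\ell)} \qquad\text{with probability}\ 1 - \exp(-\Omega(n)).
\]
This is the main obstacle and is where the smoothed analysis assumption really gets used.

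To prove the least-singular-value bound I would use a leave-one-out argument. Fix a column index $i$, condition on all randomness outside column $i$, let $W_i$ be the span of the other $k-1$ columns of $A$, and bound the length of the projection of $A_i = \tilde u^{(1)}_i\otimes\cdots\otimes\tilde u^{(p)}_i$ onto $W_i^\perp$; a standard reduction then converts such per-column bounds into a bound on $\sigma_k(A)$. The per-column bound is the anti-concentration statement that, for any \emph{fixed} subspace $W\subset\R^{n^p}$ with $\dim W \le n^p/2$ and independent $\rho$-perturbed vectors $x_1,\dots,x_p$,
\[
\Pr\!\Bigl[\,\bigl\|\Pi_{W^\perp}(x_1\otimes\cdots\otimes x_p)\bigr\|_2 \le (\rho/n)^{Cp}\,\Bigr] \le \exp(-\Omega(n)),
\]
which one can prove by induction on $p$, peeling off one Gaussian factor at a time via Carbery--Wright (or a hyperplane-decoupling / net argument as in \cite{BCMV}), each step losing only a $(\rho/n)^{O(1)}$ factor in the threshold and an $\exp(-\Omega(n))$ factor in the failure probability. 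With this bound in hand, Theorem~\ref{thm:jennrich} applied to $\hat T'$ yields the claimed polynomial-time recovery of all $k$ rank-one terms at the stated noise level, and the total running time is $(n/\rho)^{O(\ell)}$ because the reshaped tensor lives in dimension $n^{O(\ell)}$ and Jennrich runs in polynomial time in that dimension.
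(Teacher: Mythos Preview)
Your reduction is exactly the paper's: flatten the $\ell$-tensor to a $3$-tensor of shape $n^p \times n^p \times n^{\ell-2p}$, verify the hypotheses of Theorem~\ref{thm:jennrich}, and reduce the condition-number hypothesis via leave-one-out (Lemma~\ref{lem:leave-one-out}) to an anti-concentration statement for $\Pi_{W^\perp}(\tilde x_1\otimes\cdots\otimes\tilde x_p)$ with $\dim W^\perp \ge \delta n^p$ (the paper's Proposition~\ref{prop:quantitativebound} and Lemma~\ref{lem:quantitativebound}). The separation and noise checks you sketch are also how the paper dispatches them.

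The one place to be careful is the anti-concentration step. You offer ``Carbery--Wright or a hyperplane-decoupling / net argument as in \cite{BCMV}''; the paper explicitly points out that Carbery--Wright applied to a single degree-$p$ polynomial only yields an \emph{inverse polynomial} failure probability, which is not enough for the $1-\exp(-\Omega(n))$ guarantee in the theorem statement (and hence not enough for polynomial smoothed complexity). What is actually needed is to exploit the fact that $\dim W^\perp \ge \delta n^p$ is large, by building $\Omega(\delta n)$ ``sufficiently independent'' test directions in $W^\perp$ and combining them. The \cite{BCMV} argument you cite does this (with somewhat worse polynomial dependencies); the paper itself sketches the sharper combinatorial construction from \cite{ADMPSV18} (Claims~\ref{claim:l1} and~\ref{claim:l2}). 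So your plan is correct provided you commit to the BCMV/ADMPSV route rather than a bare Carbery--Wright peel-off, and your ``each step losing an $\exp(-\Omega(n))$ factor'' is only justified under that route.
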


To describe the main algorithmic idea, let us consider an order-$5$ tensor $T \in \R^{n \times n \times n \times n \times n}$. We can ``flatten" $T$ to get an order three tensor $$ T = \sum_{i = 1}^k \underbrace{u^{(1)}_i \otimes u^{(2)}_i}_{\mbox{factor}} \otimes \underbrace{u^{(3)}_i \otimes u^{(4)}_i}_{\mbox{factor}} \otimes \underbrace{u^{(5)}_i}_{\mbox{factor}}.$$ 
This gives us an order-$3$ tensor $T'$ of size $n^2 \times n^2 \times n$.
The effect of the ``flattening" operation on the factors can be described succinctly using the following operation. 

\begin{definition}[Khatri-Rao product]
The Khatri-Rao product of $A \in \R^{m \times k}$ and $B \in \R^{n \times k}$ is an $mn \times k$ matrix $U \odot V$ whose $i^{th}$ column is $u_i \otimes v_i$. 
\end{definition}

Our new order three tensor $T'$ 
also has a rank $k$ decomposition with factor matrices $U'=U^{(1)} \odot U^{(2)}, V'=U^{(3)} \odot U^{(4)}$ and $W'=U^{(5)}$ respectively. Note that the columns of $U'$ and $V'$ are in $n^2$ dimensions (in general they will be $n^{\lfloor (\ell-1)/2 \rfloor}$ dimensional). We could now hope that the assumptions on the condition number $U', V'$ in Theorem~\ref{thm:jennrich} are satisfied for $k = \omega(n)$. This is not true in the worst-case (see Exercise~\ref{exer:KRproduct} for the counterexample). 
However, we will prove this is true w.h.p. in the smoothed analysis model!


 
As the factors in $U^{(1)}, \dots, U^{(\ell)}$ are all polynomially upper bounded, the maximum singular value is also at most a polynomial in $n$. The following proposition shows high confidence lower bounds on the minimum singular value after taking the Khatri-Rao product of a subset of the factor matrices; this of course implies that the condition number has a polynomial upper bound with high probability. 

\begin{proposition}\label{prop:quantitativebound}
Let $\delta\in(0,1)$ be constants such that $k \le (1-\delta)n^{\ell}$. Given any $U^{(1)}, U^{(2)}, \dots, U^{(\ell)} \in \R^{n \times k}$, then for their random $\rho$-perturbations, 
we have 
\begin{align*}
    \mathbb{P}\Big[\sigma_k( \widetilde{U}^{(1)} \odot \widetilde{U}^{(2)} \odot \dots \odot \widetilde{U}^{(\ell)}) < \frac{c_1(\ell)\rho^\ell}{n^\ell}\Big]\leq k\exp\Big(-c_2(\ell)\delta n\Big).
\end{align*}
where $c_1(\ell), c_2(\ell)$ are constants that depend only on $\ell$.
\end{proposition}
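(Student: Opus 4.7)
The plan is to combine a standard leave-one-out reduction for the smallest singular value with an inductively-proved anti-concentration lemma for a single $\rho$-perturbed rank-one tensor against a fixed large-codimension subspace of $\R^{n^\ell}$. Because the desired failure probability is exponentially small in $n$, Carbery--Wright style polynomial anti-concentration is insufficient and we must exploit the full codimension $\delta n^\ell$.

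\paragraph{Leave-one-out reduction.} Let $V := \widetilde{U}^{(1)} \odot \cdots \odot \widetilde{U}^{(\ell)}$ with $i$-th column $v_i := \widetilde{u}^{(1)}_i \otimes \cdots \otimes \widetilde{u}^{(\ell)}_i$. Using the standard bound $\sigma_k(V) \ge (1/\sqrt{k})\min_i \mathrm{dist}(v_i,\, \mathrm{span}(v_j : j \neq i))$, it suffices to show, for each fixed $i$ and conditional on the perturbations of all other columns,
$$\Pr\bigl[\mathrm{dist}(v_i, W_i) < \eta\bigr] \;\le\; \exp\bigl(-c_2(\ell)\,\delta\, n\bigr),$$
where $W_i := \mathrm{span}(v_j : j \neq i)$ is deterministic of codimension $\ge \delta n^\ell + 1$ in $\R^{n^\ell}$ and $\eta := c'(\ell)\rho^\ell/n^{\ell/2}$. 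A union bound over $i$ yields the proposition with prefactor $k$.

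\paragraph{Induction on $\ell$.} The core anti-concentration lemma is: for any fixed subspace $W \subseteq \R^{n^\ell}$ of codimension $\ge \delta n^\ell$ and independent $\rho$-perturbed vectors $\widetilde{u}^{(1)},\ldots,\widetilde{u}^{(\ell)} \in \R^n$,
$$\Pr\Bigl[\Bignorm{\mathrm{Proj}_{W^\perp}\bigl(\widetilde{u}^{(1)} \otimes \cdots \otimes \widetilde{u}^{(\ell)}\bigr)} < \eta\Bigr] \;\le\; \exp(-\Omega_\ell(\delta n)).$$
The base case $\ell = 1$ follows from the Laurent--Massart $\chi^2$ tail bound, since $\mathrm{Proj}_{W^\perp}(\widetilde{u}^{(1)})$ is Gaussian in a subspace of dimension $\ge \delta n$ with variance $\rho^2/n$ per orthonormal direction. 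For the inductive step, identify $\R^{n^\ell} \cong \R^{n^{\ell-1}} \otimes \R^n$ so every $w \in \R^{n^\ell}$ becomes an $n^{\ell-1} \times n$ matrix $M_w$; condition on the first $\ell - 1$ perturbations and set $a := \widetilde{u}^{(1)} \otimes \cdots \otimes \widetilde{u}^{(\ell-1)}$ and $b := \widetilde{u}^{(\ell)}$. The identity $\iprod{a \otimes b,\, w} = a^\top M_w b$ yields $\bignorm{\mathrm{Proj}_{W^\perp}(a \otimes b)} = \norm{\Phi_a b}$, where $\Phi_a : \R^n \to W^\perp$ is linear in $a$ with range $L_a := \mathrm{span}\{M_w^\top a : w \in W^\perp\} \subseteq \R^n$. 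Two ingredients close the induction: (i) with probability $\ge 1 - \exp(-\Omega(\delta n))$ over $a$, one has $\dim L_a \ge \delta' n$ for some $\delta' = \Omega(\delta)$ and the least nonzero singular value of $\Phi_a$ is at least $\norm{a}/\poly(n)$; and (ii) conditional on $a$, the Gaussian $b$'s projection onto $L_a$ has norm at least $c\rho\sqrt{\delta'/n}$ except with probability $\exp(-\Omega(\delta' n))$, which is the base case applied in $\R^n$. Multiplying the bounds and invoking the inductive hypothesis on $a$ to guarantee $\norm{a} \ge \rho^{\ell-1}/\poly(n)$ completes the step.

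\paragraph{Main obstacle.} The technical heart is part (i): controlling $\dim L_a$ when $a$ is itself a nonlinear product of Gaussians and the bad event $\dim L_a < \delta' n$ is an algebraic rank-drop condition. Equivalently the bad event is the existence of an $X \subseteq \R^n$ with $\dim X > (1-\delta')n$ and $a \in W_X$, where $W_X := \{u \in \R^{n^{\ell-1}} : u \otimes X \subseteq W\}$. A direct union bound over a net of the Grassmannian of such $X$ would cost $\exp(\Theta(n^2))$, swamping the $\exp(-\Omega(\delta n))$ budget. The plan is to avoid any net by first establishing the uniform codimension comparison
$$\mathrm{codim}\bigl(W_X;\, \R^{n^{\ell-1}}\bigr) \;\ge\; (\delta/2)\, n^{\ell-1} \qquad \text{for every } X \text{ with } \dim X \ge (1-\delta/2)n,$$
which follows from $\dim\bigl(W^\perp \cap (\R^{n^{\ell-1}} \otimes X)\bigr) \ge (\delta/2)\, n^\ell$ combined with the factorization of the restriction map $W^\perp \to \mathrm{Hom}(X, \R^{n^{\ell-1}})$ through $X^* \otimes V_X$, where $V_X := \mathrm{span}\{Mx : M \in W^\perp, x \in X\}$. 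Since the witness $X_0 := L_a^\perp$ is determined by $a$, the inductive hypothesis applied to the single $W_{X_0}$ (of codimension $\ge (\delta/2) n^{\ell-1}$ \emph{uniformly} in the realization of $a$) bounds $\Pr[a \in W_{X_0}]$ by $\exp(-\Omega(\delta n))$. Scaling $\delta$ by a constant factor at each inductive level yields the $\ell$-dependence of $c_1(\ell)$ and $c_2(\ell)$; the most delicate point is keeping the inductive statement strong enough that this ``uniformity in $X$'' closes the loop without a Grassmannian net.
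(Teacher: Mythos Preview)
Your leave-one-out reduction and the statement of the core anti-concentration lemma match the paper exactly, and the paper also proceeds by induction on $\ell$. The genuine gap is in your ``Main obstacle'' paragraph. The step ``the inductive hypothesis applied to the single $W_{X_0}$ \dots bounds $\Pr[a \in W_{X_0}]$'' is circular: $X_0 := L_a^{\perp}$ is a function of $a$, so $W_{X_0}$ is a \emph{random} subspace depending on $a$, and the inductive hypothesis only applies to a subspace fixed before $a$ is drawn. Worse, the event $a \in W_{X_0}$ is a tautology: by definition of $L_a$, for every $x \in L_a^{\perp}$ and every $w \in W^{\perp}$ one has $\iprod{a\otimes x, w} = \iprod{M_w^{\top} a, x} = 0$, so $a \otimes X_0 \subseteq W$ and hence $a \in W_{X_0}$ with probability~$1$. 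Your ``uniform codimension'' observation is correct but does not rescue this; it only says each $W_X$ is individually small, and to exploit that you are right back to a union bound over the Grassmannian, which you correctly noted costs $\exp(\Theta(n^2))$. The claim that the least nonzero singular value of $\Phi_a$ is $\ge \norm{a}/\poly(n)$ is also unsupported and in general false for worst-case $W$.

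The paper sidesteps this measurability trap by a quite different combinatorial device. Instead of working with the abstract subspace $L_a$, it constructs \emph{beforehand} (depending only on $W$, not on the random perturbations) a family of $r = \Omega(\delta n^{\ell})$ explicit tensors $M_1,\dots,M_r \in W$ together with coordinate index-tuples $I_1,\dots,I_r$ arranged so that each $M_j$ has $|M_j(I_j)| = 1$ and vanishes on all earlier index-tuples. This ``triangular'' structure lets one reveal the coordinates of the Gaussian perturbation one at a time and show $\max_j |\iprod{M_j, \tilde{x}_1\otimes\cdots\otimes\tilde{x}_\ell}| \ge \rho^{\ell}/\poly(n)$ except with probability $\exp(-\Omega(\delta n))$, with the induction on $\ell$ happening at the level of these combinatorial witnesses (Claims~4.6 and~4.7 in the paper). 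The point is that every subspace used in the argument is fixed before any randomness is revealed, so no net over the Grassmannian is needed.
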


The proposition implies that the conditions of Theorem~\ref{thm:jennrich} hold for the flattened order-$3$ tensor $T'$; in particular, the condition number of the factor matrices is now polynomially upper bounded with high probability. Hence by running Jennrich's algorithm to the order-$3$ tensor $T'$ recovers the rank-one factors w.h.p. as required in Theorem~\ref{thm:main}. The rest of the section outlines the proof of Proposition~\ref{prop:quantitativebound}.  


\vspace{5pt}
\noindent {\em Failure probability.}
We remark on a technical requirement about the failure probability (that is satisfied by the above proposition) for smoothed analysis guarantees. We need our bounds on the condition number or $\sigma_{\min}$ to  hold with a sufficiently small failure probability, say $n^{-\omega(1)}$ or even exponentially small (over the randomness in the perturbations). This is important because in smoothed analysis applications, the failure probability essentially describes the fraction of points around any given point that are {\em bad} for the algorithm. 
In many applications, the time/sample complexity has an inverse polynomial dependence on the least singular value. For example, if we have a guarantee that $\sigma_{\min} \ge \gamma$ with probability at least $ 1- \gamma^{1/2}$, then the probability of the running time exceeding $T$ (upon perturbation) is at most $ 1/\sqrt{T}$. Such a guarantee does not suffice to show that the expected running time is polynomial (also called polynomial smoothed complexity).


Note that our matrix $\widetilde{U}^{(1)} \odot \dots \odot \widetilde{U}^{(\ell)}$ is a random matrix with highly dependent entries e.g., there are only $kn \ell$ independent variables but $k n^{\ell}$ matrix entries.  
This presents very different challenges compared to well-studied settings in random matrix theory, where every entry is independent. 

While the least singular value can be hard to handle directly, it is closely related to the {\em leave-one-out distance}, which is often much easier to deal with.
\begin{definition}\label{def:leave-one-out}
Given a matrix $M \in \R^{n \times k}$ with columns $M_1, \ldots, M_k$, the leave-one-out distance of $M$ is
\begin{equation}
    \ell(M) = \min_{i \in [k]} ~\norm{\Pi_{-i}^{\perp} M_i}_2, ~\text{where } \Pi_{-i}^{\perp} \text{ is the projection matrix orthogonal to } \text{span}(\{M_j : j \neq i\}). \nonumber
\end{equation}
\end{definition}
The leave-one-out distance is closely related to the least singular value, up to a factor polynomial in the number of columns of $M$, by the following simple lemma. 
\begin{lemma}\label{lem:leave-one-out}
For any matrix $M \in \R^{n\times k}$, we have 
\begin{equation}
    \frac{\ell(M)}{\sqrt{k}} \le \sigma_{min}(M) \le \ell(M).
\end{equation}
\end{lemma}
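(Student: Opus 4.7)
Both inequalities are elementary and rely on the variational characterization $\sigma_{\min}(M) = \min_{\|x\|_2=1} \|Mx\|_2$ (interpreted as zero if $n<k$, in which case the columns are linearly dependent and $\ell(M)=0$ as well, so the statement is trivial; I will assume $n\ge k$).

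For the upper bound $\sigma_{\min}(M)\le \ell(M)$, the plan is to exhibit a \emph{witness} $x\in\mathbb{R}^k$ with $\|Mx\|_2 \le \ell(M)\cdot\|x\|_2$. Let $i^\star \in \arg\min_i \|\Pi_{-i}^{\perp} M_i\|_2$ so that $\ell(M)=\|\Pi_{-i^\star}^{\perp} M_{i^\star}\|_2$. Writing the orthogonal decomposition
\[
M_{i^\star} \;=\; \Pi_{-i^\star}^{\perp} M_{i^\star} \;+\; \sum_{j\ne i^\star} \beta_j M_j
\]
for some coefficients $\beta_j$, I define $x\in\mathbb{R}^k$ by $x_{i^\star}=1$ and $x_j=-\beta_j$ for $j\ne i^\star$. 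Then $Mx = \Pi_{-i^\star}^{\perp} M_{i^\star}$, so $\|Mx\|_2 = \ell(M)$, and $\|x\|_2 \ge |x_{i^\star}| = 1$. Hence $\sigma_{\min}(M) \le \|Mx\|_2/\|x\|_2 \le \ell(M)$.

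For the lower bound $\ell(M)/\sqrt{k} \le \sigma_{\min}(M)$, I will dualize: take a unit vector $\alpha\in\mathbb{R}^k$ achieving $\|M\alpha\|_2 = \sigma_{\min}(M)$, and let $i^\star = \arg\max_i |\alpha_i|$, so that $|\alpha_{i^\star}| \ge 1/\sqrt{k}$. Applying the projection $\Pi_{-i^\star}^{\perp}$, which annihilates every column $M_j$ with $j\ne i^\star$, gives
\[
\Pi_{-i^\star}^{\perp}(M\alpha) \;=\; \alpha_{i^\star}\,\Pi_{-i^\star}^{\perp} M_{i^\star}.
\]
Since orthogonal projection is a contraction,
\[
\sigma_{\min}(M) \;=\; \|M\alpha\|_2 \;\ge\; \|\Pi_{-i^\star}^{\perp}(M\alpha)\|_2 \;=\; |\alpha_{i^\star}|\cdot \|\Pi_{-i^\star}^{\perp} M_{i^\star}\|_2 \;\ge\; \frac{\ell(M)}{\sqrt{k}},
\]
as required.

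There is no real obstacle here: the only thing to be careful about is that the coordinate-wise bound $|\alpha_{i^\star}|\ge 1/\sqrt{k}$ (which comes from $\|\alpha\|_2=1$ and pigeonhole) is what causes the $\sqrt{k}$ loss in the lower bound, and this loss is essentially tight — it can be shown by looking at a matrix whose columns are nearly equal unit vectors, where $\ell(M)$ is large but $\sigma_{\min}(M)$ is small, so the two bounds are qualitatively the same up to the $\sqrt{k}$ factor.
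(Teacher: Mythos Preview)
Your proof is correct and is the standard elementary argument; the paper itself does not spell out a proof for this lemma (it is simply labeled a ``simple lemma''), so there is nothing to compare against. One small quibble: your closing remark about tightness is a bit garbled---if the columns are nearly equal unit vectors then $\ell(M)$ is \emph{small}, not large---but this does not affect the proof itself.
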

 
\noindent The following (more general) core lemma that lower bounds the projection onto {\em any} given subspace of a randomly perturbed rank-one tensor implies Proposition~\ref{prop:quantitativebound}.  

\begin{lemma}\label{lem:quantitativebound}
Let $\ell \in \mathbb{N}$ and $\delta\in(0,\tfrac{1}{\ell})$ be constants, and let $W\subseteq \mathbb{R}^{n^{\times \ell}}$ be an arbitrary subspace of dimension at least $\delta n^\ell$. Given any $x_1,\cdots, x_\ell\in\mathbb{R}^n$, then their random $\rho$-perturbations $\tilde{x}_1,\cdots,\tilde{x}_\ell$ satisfy
\begin{align*}
    \Pr\Big[\norm{\Pi_W (\tilde{x}_1\otimes\tilde{x}_2\otimes\cdots\otimes\tilde{x}_\ell)}_2< \frac{c_1(\ell)\rho^\ell}{n^\ell}\Big]\leq \exp\Big(-c_2(\ell)\delta n\Big),
\end{align*}
where $c_1(\ell), c_2(\ell)$ are constants that depend only on $\ell$.
\end{lemma}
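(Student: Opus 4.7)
The plan is to prove Lemma~\ref{lem:quantitativebound} by induction on the order $\ell$. The base case $\ell=1$ is classical Gaussian small-ball anti-concentration: for any $W\subseteq\R^n$ of dimension at least $\delta n$, $\Pi_W\tilde{x}_1$ is a shifted Gaussian in $W$ with covariance $(\rho^2/n)\Pi_W$, and a standard small-ball bound gives $\Pr[\|\Pi_W\tilde{x}_1\|<t]\leq (Ct\sqrt{n}/\rho)^{\delta n}$; choosing $t=c\rho/\sqrt{n}$ already yields failure probability $\exp(-\Omega(\delta n))$, stronger than the claimed $\rho/n$ scaling.

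For the inductive step, I assume the inductive hypothesis for order $\ell-1$, split $\R^{n^\ell}=\R^{n^{\ell-1}}\otimes\R^n$, and peel off the last factor $\tilde{x}_\ell$. Setting $C:=I_{n^{\ell-1}}\otimes\tilde{x}_\ell^T$ and $R(\tilde{x}_\ell):=C\,\Pi_W\,C^T$ gives the identity $\|\Pi_W(T_{\ell-1}\otimes\tilde{x}_\ell)\|^2 = T_{\ell-1}^T R(\tilde{x}_\ell)T_{\ell-1}$, where $T_{\ell-1}:=\tilde{x}_1\otimes\cdots\otimes\tilde{x}_{\ell-1}$. First I would lower-bound $\tr\,R(\tilde{x}_\ell)=\tilde{x}_\ell^T N\tilde{x}_\ell$, where $N:=\tr_1(\Pi_W)$ is PSD on $\R^n$ with trace at least $\delta n^\ell$ and operator norm at most $n^{\ell-1}$; a Markov-style eigenvalue count produces a subspace $S_0\subseteq\R^n$ of dimension $\Omega(\delta n)$ on which $N\succeq\Omega(\delta n^{\ell-1})I$, and the base case applied to $\tilde{x}_\ell$ on $S_0$ gives $\tr\,R(\tilde{x}_\ell)\geq\Omega(\delta\rho^2 n^{\ell-2})$ with probability $1-\exp(-\Omega(\delta n))$.

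Then I would extract a subspace $V\subseteq\R^{n^{\ell-1}}$ of dimension $\Omega(\delta n^{\ell-1})$ on which $R(\tilde{x}_\ell)\succeq\lambda I$ for some $\lambda\geq\rho^2/\poly(n)$, apply the inductive hypothesis to the pair $(V,T_{\ell-1})$, and combine $\|\Pi_V T_{\ell-1}\|^2\geq c_1(\ell-1)^2(\rho/n)^{2(\ell-1)}$ with $R\succeq\lambda I$ on $V$ to conclude $\|\Pi_W T\|^2 \geq c_1(\ell)^2(\rho/n)^{2\ell}$.

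The hard part will be producing such a $V$ with both $\dim V=\Omega(\delta n^{\ell-1})$ and $\lambda$ polynomially lower bounded: the naive trace-plus-operator-norm argument only gives $\dim V=\Omega(\delta\rho^2 n^{\ell-2}/\|\tilde{x}_\ell\|^2)$, a polynomially small fraction of $n^{\ell-1}$ when factors have polynomial norms, which yields only sub-exponential failure probability from the inductive hypothesis and collapses the induction. Closing this gap is where the smoothed analysis genuinely enters: by exploiting additional randomness in $\tilde{x}_\ell$ via higher-order traces such as $\tr(R(\tilde{x}_\ell)^2)$, or by a direct smoothed analysis of the contraction map $C|_W$ showing that its smallest nonzero singular value is at least $\rho/\poly(n)$ with probability $1-\exp(-\Omega(\delta n))$, one argues that a constant fraction of the nonzero eigenvalues of $R(\tilde{x}_\ell)$ are uniformly lower bounded with exponentially high probability, letting the induction close with $c_2(\ell)=\Omega(c_2(\ell-1))$ throughout.
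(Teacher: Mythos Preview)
Your spectral peeling approach is genuinely different from the paper's combinatorial one, and you have correctly identified the place where it breaks. Unfortunately the two fixes you suggest do not close the gap.

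The difficulty is exactly the dimension of the subspace $V$ on which $R(\tilde{x}_\ell)\succeq\lambda I$. From the trace bound $\tr R(\tilde{x}_\ell)\ge\Omega(\delta\rho^2 n^{\ell-2})$ and the operator-norm bound $\norm{R(\tilde{x}_\ell)}\le\norm{\tilde{x}_\ell}_2^2$, Markov on the eigenvalue distribution only produces $\dim V$ of order $\delta\rho^2 n^{\ell-2}$, a $\rho^2/n$ fraction of what you need. Controlling $\tr(R(\tilde{x}_\ell)^2)$ would help only if you could \emph{upper}-bound it by roughly $(\tr R)^2/(\delta n^{\ell-1})$, i.e.\ rule out eigenvalue concentration; but $R(\tilde{x}_\ell)$ is a rank-deficient random contraction of a worst-case projector $\Pi_W$, and for adversarial $W$ there is no reason its spectrum should be flat --- indeed one can cook up $W$ so that $C\Pi_W C^T$ has essentially all its mass on $O(n^{\ell-2})$ eigenvalues for most $\tilde{x}_\ell$. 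Your second suggestion, a direct smoothed lower bound on $\sigma_{\min}(C|_W)$ with exponentially small failure probability, is essentially a restatement of what you are trying to prove: $C$ depends on a single $n$-dimensional Gaussian, so getting $\exp(-\Omega(\delta n))$ failure for $\Omega(\delta n^{\ell-1})$ singular values of $C\Pi_W$ is no easier than the original lemma.

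The paper avoids this spectral bottleneck entirely. Instead of working with $\Pi_W$ and its contractions, it constructs explicit elements $M_1,\dots,M_r\in W$ (with $r=\Omega(\delta n^\ell)$) that are \emph{triangular in coordinates}: each $M_j$ is tagged with a multi-index $I_j\in[n]^\ell$, has $|M_j(I_j)|=1$, $\norm{M_j}_\infty\le 1$, and vanishes at all earlier tagged multi-indices. This combinatorial structure lets one condition sequentially on $\langle M_r,\tilde{x}_1\otimes\cdots\otimes\tilde{x}_\ell\rangle,\dots,\langle M_{j+1},\cdot\rangle$ while keeping one fresh Gaussian coordinate available for $\langle M_j,\cdot\rangle$, yielding an honest product of $\Omega(\delta n)$ independent small-ball events at each level of the induction. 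The key point is that the structure propagates: applying $\tilde{x}_\ell$ to the $M_j$'s produces, with probability $1-\exp(-\Omega(\delta n))$, a family of order-$(\ell-1)$ tensors that (after rescaling) again satisfy the triangular coordinate conditions, so the induction closes without any loss in the dimension parameter. Your trace-and-eigenspace route loses precisely this: it trades the combinatorial coordinate structure for a spectral summary that is too coarse to preserve $\Omega(\delta n^{\ell-1})$ dimensions through the contraction.
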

The polynomial of $n$ in the exponent of the failure probability is tight; however it is unclear what the right polynomial dependence of $n$ in the least singular value bound, and the right dependence on $\ell$ should be. 
The above lemma can be used to lower bound the least singular value of the matrix $\widetilde{U}^{(1)} \odot  \dots \odot \widetilde{U}^{(\ell)}$ in Proposition~\ref{prop:quantitativebound} as follows: we can lower bound the leave-one-out distance of Lemma~\ref{lem:leave-one-out} by applying Lemma~\ref{lem:quantitativebound} for each column $i \in [k]$ with $W$ being the subspace given by $\Pi^{\perp}_{-i}$ and $x_1,\dots, x_\ell$ being $u_i^{(1)}, \dots, u_i^{(\ell)}$; a union bound over the $k$ columns gives Proposition~\ref{prop:quantitativebound}.  
The first version of this lemma was proven in Bhaskara et al.~\cite{BCMV} with worse polynomial dependencies both in lower bound on the condition number, and in the exponent of the failure probability. The improved statement presented here and proof sketched in Section~\ref{sec:smoothedproof} are based on Anari et al.~\cite{ADMPSV18}.


\paragraph{Relation to anti-concentration of polynomials.}
We now briefly describe a connection to anti-concentration bounds for low-degree polynomials, and describe a proof strategy that yields a weaker version of Lemma~\ref{lem:quantitativebound}. Anti-concentration inequalities (e.g., the Carbery-Wright inequality) 
for a degree-$\ell$ polynomial $g : \R^n \rightarrow \R$ with $\norm{g}_2 \ge \eta$, and $x \sim N(0,1)^n$ are of the form 
\begin{equation}\label{eq:CWinequality}
\Pr_{x \sim N(0,1)^n}\Big[|g(x) - t| < \varepsilon \eta \Big]\leq O(\ell) \cdot \varepsilon^{1/\ell} . 
\end{equation}
  
This can be used to derive a weaker version of Lemma~\ref{lem:quantitativebound} with an inverse polynomial failure probability, by considering a polynomial whose co-efficients ``lie'' in the subspace $W$. As we discussed in the previous section, this failure probability does not suffice for expected polynomial running time (or polynomial smoothed complexity). On the other hand, Lemma~\ref{lem:quantitativebound} manages to get an inverse polynomial lower bound with exponentially small failure probability, by considering $n^{\Omega(1)}$ different polynomials. 
In fact one can flip this around and use Lemma~\ref{lem:quantitativebound} to show a vector-valued variant of the Carbery-Wright anti-concentration bounds, where if we have $m \ge \delta n^{\ell}$ ``sufficiently different'' polynomials $g_1, g_2, \dots, g_m: \R^n \to \R$ each of degree $\ell$, then we can get $\varepsilon^{c(\ell) \delta n}$ where $c(\ell)>0$ is a constant, for the bound in \eqref{eq:CWinequality}. The advantage is that while we lose in the ``small ball'' probability with the degree $\ell$, we gain an $\delta n$ factor in the exponent on account of having a vector valued function with $m$ co-ordinates. See \cite{BCPV} for a statement and proof.

\subsection{Proof Sketch of Lemma~\ref{lem:quantitativebound}} \label{sec:smoothedproof}

\newcommand{\tilx}{\tilde{x}}
\newcommand{\tily}{\tilde{y}}

The proof of Lemma~\ref{lem:quantitativebound} is a careful inductive proof. We sketch the proof for $\ell \le 2$ to give a flavor of the arguments involved. See \cite{ADMPSV18} for the complete proof. For convenience, let $\tilx:=\tilde{x}^{(1)}$ and $\tily:=\tilde{x}^{(2)}$. The high level outline is the following. We will show that there exist $n \times n$ matrices $M_1, M_2, \dots, M_r \in W$ of  bounded length measured in Frobenius norm (for general $\ell$ these would be order $\ell$ tensors of length at most $n^{\ell/2}$) which additionally satisfy certain ``orthogonality'' properties; here $r=\Omega_\ell(\delta n^{\ell})$.  We will use the orthogonality properties and the random perturbations to extract enough ``independence'' across $\{\iprod{M_i,(\tilx \otimes \tily)}: i \in [r]\}$; this will allow us to conclude that at least one of these $r$ inner products is at least $\rho /\sqrt{n}$ in magnitude with probability $\ge 1-\exp(-\Omega(\delta n))$.

\vspace{5pt}
\noindent {\em What orthogonality property do we want? }
\vspace{5pt}

\noindent {\bf Case $\ell=1$}. Let us start with $\ell=1$. In this case we have a subspace $W \subset \R^n$ of dimension at least $\delta n$. Here we could just choose the $r$ vectors $v_1, \dots, v_r \in \R^n$ to be an orthonormal basis for $W$, to conclude the lemma, since $\iprod{v_i,g}$ are independent. However, let's consider a slightly different construction where $v_1, \dots v_r$ are not orthonormal which will allow us to generalize to higher $\ell >1$.     

\begin{claim}[for $\ell=1$]\label{claim:l1}
There exists a set of $v_1, \dots, v_r \in W$, and a set of distinct indices $i_1, i_2, \dots, i_r \in [n]$, for $r =\dim(W)$ such that for all $j \in \set{1, 2, \dots, r}$:

~~~~~ (a)~ $\norm{v_j}_\infty \le 1$, ~~ (b)~ $|v_j(i_j)| = 1$, ~~(c) ~ $v_j(i_{j'})=0$ for all $j' <j$.
\end{claim}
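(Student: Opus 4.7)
The plan is to construct $v_1, v_2, \dots, v_r$ and $i_1, i_2, \dots, i_r$ in order by a greedy, row-echelon-style procedure. At step $j$, I will consider the subspace $W_j := \{ v \in W : v(i_{j'}) = 0 \text{ for all } j' < j \}$, pick any nonzero vector $w \in W_j$, take $i_j$ to be a coordinate at which $|w(\cdot)|$ attains its maximum, and then set $v_j := w / |w(i_j)|$. Observe that the choice of $i_j$ is made \emph{after} $w$ is chosen, not before; this is the key to getting the right pivot automatically.

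This is well-defined because $W_j$ arises from $W$ by imposing at most $j-1$ coordinate constraints, so $\dim(W_j) \ge r - (j-1) \ge 1$ whenever $j \le r$, and hence a nonzero $w$ exists. After normalization, property (a) holds because $\|v_j\|_\infty = \|w\|_\infty / |w(i_j)| = 1$, using that $i_j$ was chosen as an argmax of $|w(\cdot)|$; property (b) holds by construction of the normalization; and property (c) holds because $v_j$ is a scalar multiple of $w \in W_j$, which preserves the vanishing $v_j(i_{j'}) = 0$ for all $j' < j$.

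Distinctness of the indices falls out for free from (b) and (c): since $v_j(i_{j'}) = 0$ for every $j' < j$ but $|v_j(i_j)| = 1 \neq 0$, the coordinate $i_j$ cannot coincide with any earlier $i_{j'}$. I do not foresee any real obstacle---this is essentially Gaussian elimination on an arbitrary basis of $W$, with the pivot at each step chosen to be the largest-magnitude entry of whatever vector survives into $W_j$, which automatically delivers the unit $\ell_\infty$ bound in (a).
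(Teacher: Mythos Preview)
Your proposal is correct and essentially identical to the paper's own proof: both iteratively pick any nonzero vector in the subspace $W_j = \{x \in W : x(i_{j'}) = 0 \text{ for } j' < j\}$, rescale so the $\ell_\infty$ norm is $1$, and let $i_j$ be a coordinate where that maximum is attained. Your write-up is slightly more explicit (you spell out the dimension bound $\dim(W_j) \ge r - (j-1)$ and verify distinctness of the $i_j$), but the argument is the same row-echelon construction.
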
 

Hence, each of the vectors $v_j$ has a non-negligible component orthogonal to the span of $v_{j+1}, \dots, v_r$. This will give us sufficient independence across the random variables $\iprod{v_1, \tilx}, \dots, \iprod{v_r,\tilx}$. Consider the $r$ inner products in reverse order i.e.,  $\iprod{v_r, \tilx}, \iprod{v_{r-1}, \tilx}, \dots, \iprod{v_1, \tilx}$. Let $\tilx = x+z$ where $z \sim N(0,\rho^2/n)^n$. First $\iprod{v_r, \tilx}=\iprod{v_r,x}+\iprod{v_r,z}$, where $\iprod{v_r,z}$ is an independent Gaussian $N(0,\rho^2/n)$ due to the rotational invariance of Gaussians. Hence for some absolute constant $c>0$, from simple Gaussian anti-concentration $|\iprod{v_r,x}|< c \rho/\sqrt{n}$ with probability $1/2$. Now, let us analyze the event $\iprod{v_j,x}$ is small, after conditioning on the values of $\iprod{v_{j+1},\tilx}, \dots, \iprod{v_{r},\tilx}$. By construction, $|v_j(i_j)|=1$, whereas $v_{j+1}(i_j)=\dots=v_{r}(i_j)=0$. Hence
\begin{align*}
\Pr\Big[ |\iprod{v_j,\tilx}| < \frac{c\rho}{\sqrt{n}} ~\Big|~ \iprod{v_{j+1},\tilx}, \dots,\iprod{v_r, \tilx} \Big] & \le \text{sup}_{t \in \R} \Pr\Big[ | z(i_j) -t | < \frac{c\rho}{\sqrt{n}}\Big] \le \frac{1}{2}. \\
\text{ Hence } \Pr\Big[ \forall j \in [r], ~ |\iprod{v_j,\tilx}| <  \frac{c\rho}{\sqrt{n}} \Big]& \le \exp(-r), ~\text{ as required}.
\end{align*}

\begin{proof}[Proof of Claim~\ref{claim:l1}]
We will construct the vectors iteratively. For the first vector, pick any vector $v_1$ in $W$, and rescale it so that $\norm{v_1}_\infty=1$; let $i_1 \in [n]$ be an index where $|v_1(i_1)|=1$. For the second vector, consider the restricted subspace $\set{ x \in W: x(i_1)=0}$. This has dimension $\dim(W)-1$; so we can again pick an arbitrary vector in it and rescale it to get the necessary $v_2$. We can repeat this until we get $r=\dim(W)$ vectors (when the restricted subspace becomes empty).      
\end{proof}

\paragraph{Proof sketch for $\ell=2$.} We can use a similar argument to come up with an analogous set of matrices $M_1, \dots, M_r$ inductively. It will be convenient to identify each of these matrices $M_j$ with an (row,column) index pair $I_j =(i_{j}, i'_{j}) \in [n] \times [n]$. We will also have a total order among all of the index pairs as follows. We first have a ordering among all the valid row indices $R=\set{i_j : j \in [r]}$ (say $i_1 \prec i_2 \prec \dots \prec i_r$). Moreover, among all index pairs $R_{i^*}$ in the same row $i^*$ (i.e., $R_{i^*}:=\set{I_j=(i^*, i'_j)}$), we have a total ordering (note that it could be the case that $(2,4) \prec (2,7)$ and $(3,7) \prec (3,4)$, since the orderings for $i^*=2$ and $i^*=3$ could be different).   

\begin{claim}[for $\ell=2$]\label{claim:l2}
Given any subspace $W \subset \R^{n \times n}$ of dimension $\dim(W) \ge \delta n^2$, there exists $r$ many (row,column) index pairs $I_1 \prec I_2 \prec \dots \prec I_r$ as outlined above, and a set of associated matrices $M_1, M_2 \dots, M_r$ such that for all $j \in \set{1, 2, \dots, r}$:
~~~ (a)~ $\norm{M_j}_\infty \le 1$, ~~ (b)~ $|M_j(I_j)| = 1$, \\
(c) ~ $M_j(I_{j'})=0$ for all $j' <j$ and $M_j(i_1,i_2)=0$ for any $i_1 \prec i_j, i_2 \in [n]$ where $I_j=(i_j, i'_j)$.\\
Further there are at least $|R| = \Omega(\delta n)$ valid row indices, and each of these indices has $\Omega(\delta n)$ index pairs associated with it. 
\end{claim}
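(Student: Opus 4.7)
The plan is to reduce the $\ell=2$ case to Claim~\ref{claim:l1} ($\ell=1$) by peeling $W$ off one row at a time under a carefully chosen filtration. Fix any total order on $[n]$ (say the natural one, $1 \prec 2 \prec \cdots \prec n$), and define the filtration $V_i := \{M \in W : M_{k,:} = 0 \text{ for all } k \prec i\}$ together with the row projection $P_i \colon V_i \to W_i := P_i(V_i) \subseteq \R^n$. Each $P_i$ is surjective with kernel $V_{i+1}$, so rank-nullity telescopes to
\[ \sum_{i=1}^n \dim(W_i) \;=\; \dim(V_1) - \dim(V_{n+1}) \;=\; \dim(W) \;\ge\; \delta n^2. \]
Since every $\dim(W_i) \le n$, a standard averaging argument yields at least $\tfrac{\delta n}{2}$ indices $i^*$ with $\dim(W_{i^*}) \ge \tfrac{\delta n}{2}$; take $R$ to be these indices, inheriting the order $\prec$ from $[n]$, which gives the $|R| = \Omega(\delta n)$ valid rows required.

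For each $i^* \in R$ I would then apply Claim~\ref{claim:l1} to $W_{i^*} \subseteq \R^n$ to extract $r_{i^*} \ge \tfrac{\delta n}{2}$ vectors $v^{(i^*)}_1, \dots, v^{(i^*)}_{r_{i^*}}$ and column indices $c^{(i^*)}_1, \dots, c^{(i^*)}_{r_{i^*}} \in [n]$ satisfying the $\ell=1$ conclusions. Lift each $v^{(i^*)}_j$ to some $M^{(i^*)}_j \in V_{i^*}$ with $P_{i^*}(M^{(i^*)}_j) = v^{(i^*)}_j$; since $M^{(i^*)}_j \in V_{i^*}$, it automatically vanishes on every row $\prec i^*$. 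Set $I^{(i^*)}_j := (i^*, c^{(i^*)}_j)$ and order all pairs lexicographically (across rows by $\prec$, within a row by the Claim~\ref{claim:l1} ordering). Property~(c) is then immediate: a pair $I_{j'}$ sitting in a strictly earlier row is killed because $M^{(i^*)}_j$ has that whole row equal to $0$, and an earlier same-row pair $I^{(i^*)}_{j'}$ is killed because $M^{(i^*)}_j(i^*, c^{(i^*)}_{j'}) = v^{(i^*)}_j(c^{(i^*)}_{j'}) = 0$ by Claim~\ref{claim:l1}(c). The cardinality bound $|R| = \Omega(\delta n)$ together with $r_{i^*} = \Omega(\delta n)$ per row supplies the quantitative count.

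The main obstacle is the simultaneous normalization (a)--(b). The lift $M^{(i^*)}_j$ is only well-defined modulo $V_{i^*+1}$, so to have both $\|M^{(i^*)}_j\|_\infty \le 1$ and $|M^{(i^*)}_j(I^{(i^*)}_j)| = 1$ one needs the $\ell_\infty$ maximum of the matrix to be attained in row $i^*$ at column $c^{(i^*)}_j$. My fix would be to choose the lift minimizing $\|M\|_\infty$ and then rescale, which gives $\|M^{(i^*)}_j\|_\infty = 1$ at the cost of weakening (b) to $|M^{(i^*)}_j(I^{(i^*)}_j)| \ge 1/\mathrm{poly}(n, 1/\delta)$; this relaxation is harmless for the downstream anti-concentration argument, which only uses a polynomial lower bound on $|M_j(I_j)|$ (just as the $\ell=1$ proof sketch ultimately only required $|v_j(i_j)|$ to be a non-negligible fraction of $\|v_j\|_\infty$). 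To recover the exact bound $|M^{(i^*)}_j(I^{(i^*)}_j)| = 1$, one would need to choose the row ordering $\prec$ adaptively so that, inside each $V_{i^*}$, the $\ell_\infty$ maximum of every matrix is realized in row $i^*$; this can be done by processing rows recursively in the order in which maximum-magnitude entries appear in a greedily chosen basis of $W$, which is the fallback I would pursue if the relaxed version were not enough.
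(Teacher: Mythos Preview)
Your filtration-and-lift idea is natural, but the normalization step is a genuine gap, not a cosmetic one. The claim that minimizing $\|M\|_\infty$ over lifts of $v$ gives $|M(I_j)| \ge 1/\mathrm{poly}(n,1/\delta)$ is false: take $W = \mathrm{span}\{\epsilon\, e_1 e_j^T + e_i e_j^T : 2 \le i \le n,\ 1 \le j \le n\}$, so $\dim(W) = n(n-1)$ and $W_1 = \R^n$. Any lift of $e_j \in W_1$ to $V_1 = W$ must satisfy $\sum_{i \ge 2} c_{i,j} = 1/\epsilon$, forcing some entry in a later row to be at least $1/(\epsilon(n-1))$. Since $\epsilon$ is arbitrary, no polynomial bound on $\|M\|_\infty/|M(I_j)|$ is available, and the relaxed version you propose cannot be fed back into the $\ell=1$ anti-concentration step without losing the $\mathrm{poly}(n)$ conclusion of Lemma~\ref{lem:quantitativebound}.

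The fix is exactly in the direction of your fallback, but cleaner than you state it: rather than fixing the row order and then lifting, apply Claim~\ref{claim:l1} directly to $W$ viewed as a subspace of $\R^{n^2}$. This already produces $\dim(W) \ge \delta n^2$ matrices with $\|M_j\|_\infty \le 1$ and $|M_j(I_j)| = 1$ for free. By pigeonhole one row $i^*$ contains at least $\delta n$ of the indices $I_j$; keep those matrices (the triangular property (c) within the row is inherited from Claim~\ref{claim:l1}), declare $i^*$ the first row, and then pass to $W' = \{M \in W : M(i^*,\cdot) = 0\}$, which has dimension $\ge \dim(W) - n$. Iterating $\Omega(\delta n)$ times yields the claim, and each matrix extracted at stage $t$ automatically vanishes on the rows chosen at stages $1,\dots,t-1$, giving property~(c) across rows without any lifting. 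Your fallback description (``$\ell_\infty$ maximum of every matrix in $V_{i^*}$ is realized in row $i^*$'') is too strong to hold as written --- it would force $V_{i^*+1}=0$ --- but the iterate-then-restrict version above is what the paper's hint intends.
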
 
The approach to proving the above claim is broadly similar to that of Claim~\ref{claim:l1}. The proof repeatedly treats the vectors in $W$ as vectors in $\R^{n^2}$ and applies Claim~\ref{claim:l1} to extract a valid row with $\Omega(\delta n)$ valid column indices, and iterates. We leave the formal proof as Exercise~\ref{exer:proof}. 


Once we have Claim~\ref{claim:l2}, the argument for Lemma~\ref{lem:quantitativebound} is as follows. Firstly, $\norm{M_j}_2 \le n$ since $\norm{M_j}_\infty \le 1$. Hence, we just need to show that there exists $j \in [r]$ s.t. $|\iprod{M_j, \tilx \otimes \tily}| \ge c\rho/n$ in magnitude with probability $\ge 1-\exp(-\Omega(\delta n))$. 
Consider the vectors $\set{M_1 \tily, M_2 \tily, \dots, M_r \tily} \subset \R^n$ obtained by applying just $\tily$. For each valid row $i^* \in R$, consider only the corresponding vectors with row index $i^*$ from $\set{M_j \tily: j \in [r]}$ and set $v_{i^*}$ to be the vector with the largest magnitude entry in coordinate $i^*$.
By our argument for $\ell=1$ we can see that with probability at least $1-\exp(-\Omega(\delta n))$, $|v_{i^*}(i^*)| > \tau:=c\rho/\sqrt{n}$, for some constant $c>0$. Now by scaling these vectors $\set{v_i: i \in [R]}$ by at most $1/\tau$ each, we see that they satisfy Claim~\ref{claim:l1}. Hence, using the argument for $\ell=1$ again, we get Lemma~\ref{lem:quantitativebound}. 
Extending this argument to higher $\ell>2$ is technical, and we skip the details. 

\subsection{Implications for applications} \label{sec:overcomplete:applications}

The smoothed polynomial time guarantees for overcomplete tensor decompositions in turn imply polynomial time smoothed analysis guarantees for several learning problems. In the smoothed analysis model for these parameter estimation problems, the unknown parameters $\theta$ of the model are randomly perturbed to give $\tilde{\theta}$, and samples are drawn from the model with parameters $\tilde{\theta}$. 

However, as we alluded to earlier, the corresponding tensor decomposition problems that arise, e.g., from Proposition~\ref{prop:gmm:moments} and \ref{prop:hmm:moments} do not always fit squarely in the smoothed analysis model in Section~\ref{sec:smoothedmodel}. For example, the random perturbations to the factors $\set{u^{(j)}_i: i \in [k], j \in [\ell]}$ may not all be independent. In learning mixtures of spherical Gaussians, the factors of the decomposition are $\tilde{\mu}_i^{\otimes \ell}$ for some appropriate $\ell>1$, where $\tilde{\mu}_i$ is the mean of the $i$th component.
In learning hidden Markov models (HMMs), each factor is a sum of appropriate monomials of the form $\tilde{a}_{i_1} \otimes \tilde{a}_{i_2} \otimes \dots \otimes \tilde{a}_{i_{\ell}}$, where $i_1 i_2 \dots i_{\ell}$ correspond to {\em length-$\ell$ paths} in a graph.

Fortunately the bounds in Proposition~\ref{prop:quantitativebound} can be used to derive similar high confidence lower bounds on the least singular value for random matrices that arise from such applications using {\em decoupling} inequalities. For example, one can prove such bounds (as in Proposition~\ref{prop:quantitativebound}) for the $k \times n^{\ell}$ matrix where the $i$th column is $\tilde{\mu}_i^{\otimes \ell}$ (as required for mixtures of spherical Gaussians). 
Such bounds also hold for other broad classes of random matrices that are useful for other applications like hidden markov models; see \cite{BCPV} for details. 

In the smoothed analysis model for mixtures of spherical Gaussians, the means $\set{\mu_i :i \in [k]}$ are randomly perturbed. The following corollary gives polynomial time smoothed analysis guarantees for estimating the means of a mixture of $k$ spherical Gaussians. See \cite{BCMV,Belkinetal} for details.
\begin{corollary}[Mixture of $k$ spherical Gaussians in $n \ge k^{\epsilon}$ dimensions] \label{corr:gmm:overcomplete} For any $\epsilon>0, \eta >0$, there is an algorithm that in the smoothed analysis setting learns the means of a mixture of $k$ spherical Gaussians in $n \ge k^{\epsilon}$ dimensions up to accuracy $\eta>0$ with running time and sample complexity $poly(n, 1/\eta, 1/\rho)^{O(1/\epsilon)}$ and succeeds with probability at least $1-\exp(-\Omega(n))$. 
\end{corollary}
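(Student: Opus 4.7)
The plan is to instantiate the general recipe of Section~\ref{sec:recipe}, combined with Theorem~\ref{thm:main}, for the moment tensor of the Gaussian mixture. First, choose an even integer $\ell = \Theta(1/\epsilon)$ so that $\lfloor (\ell-1)/2 \rfloor \ge 1/\epsilon$, which in turn ensures $k \le n^{1/\epsilon} \le n^{\lfloor (\ell-1)/2 \rfloor}/2$ since $n \ge k^{\epsilon}$. Using Proposition~\ref{prop:gmm:moments} with this $\ell$, construct the statistic $\mathcal{T}_\ell$ from the first $\ell$ empirical moments, so that $\E[\mathcal{T}_\ell] = T_\ell = \sum_{i=1}^k \tilde\mu_i^{\otimes \ell}$ (absorbing the uniform mixing weight $1/k$ into a known scalar). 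Standard concentration for polynomial moments of sub-Gaussian vectors shows that with $\mathrm{poly}(n,1/\eta,1/\rho)^{O(1/\epsilon)}$ samples one can form $\widehat{T}_\ell = T_\ell + E$ where every entry of $E$ is bounded by $\eta'/(n/\rho)^{O(\ell)}$ for a target $\eta' \ll \eta$ with failure probability $\exp(-\Omega(n))$.

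Next, flatten the $\ell$-mode symmetric tensor $\widehat{T}_\ell$ into a three-mode tensor of shape $n^{\ell_1}\times n^{\ell_2}\times n^{\ell_3}$ where $\ell_1 = \ell_2 = \lfloor (\ell-1)/2\rfloor$ and $\ell_3 = \ell - \ell_1 - \ell_2 \ge 1$. The rank-$k$ decomposition becomes $\sum_{i} \tilde\mu_i^{\otimes \ell_1} \otimes \tilde\mu_i^{\otimes \ell_2} \otimes \tilde\mu_i^{\otimes \ell_3}$, i.e., the three factor matrices have columns that are symmetric Khatri-Rao powers of the perturbed means. Apply Jennrich's algorithm (Theorem~\ref{thm:jennrich}) to this flattened tensor; to invoke its guarantees we need (i) the first two factor matrices to be well-conditioned, and (ii) the third factor to have pairwise separated normalized columns. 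Given (i) and (ii), Theorem~\ref{thm:jennrich} recovers rank-one terms $\tilde u_i^{(1)} \otimes \tilde u_i^{(2)} \otimes \tilde u_i^{(3)}$ that are $\eta'$-close in Frobenius norm to $\tilde\mu_i^{\otimes \ell_1} \otimes \tilde\mu_i^{\otimes \ell_2}\otimes \tilde\mu_i^{\otimes \ell_3}$ for some permutation, after which refolding any of the factors as an order-$\ell_j$ tensor and taking its best rank-one approximation extracts $\tilde\mu_i$ up to accuracy $\eta$ (since a small perturbation of a symmetric rank-one tensor yields a correspondingly small perturbation of its generating vector).

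The main obstacle, and the heart of the proof, is establishing condition (i): the factor matrices here are \emph{not} independently perturbed Khatri-Rao products, so Proposition~\ref{prop:quantitativebound} does not apply directly. Each column is $\tilde\mu_i^{\otimes \ell_1}$, i.e., a Khatri-Rao product of $\ell_1$ copies of the \emph{same} $\rho$-perturbed vector $\tilde\mu_i$, and the different columns use independent perturbations but the $\ell_1$ factors within a column are perfectly correlated. This is exactly the setting addressed by the decoupling inequalities referenced after Proposition~\ref{prop:quantitativebound}: one compares the $k\times n^{\ell_1}$ matrix with columns $\tilde\mu_i^{\otimes \ell_1}$ to the ``decoupled'' matrix whose columns are $\tilde\mu_i^{(1)}\otimes\cdots\otimes \tilde\mu_i^{(\ell_1)}$ with independent $\rho$-perturbed copies, and transfers the singular value bound from Proposition~\ref{prop:quantitativebound} (or its core Lemma~\ref{lem:quantitativebound}) to the symmetric case up to constant factors. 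This yields $\sigma_k \ge (\rho/n)^{O(\ell)}$ with probability $1-\exp(-\Omega(n))$ for both the first and second flattened factor matrices.

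Condition (ii) is easy by comparison: the normalized columns $\tilde\mu_i^{\otimes \ell_3}/\|\tilde\mu_i\|^{\ell_3}$ are $\delta$-separated whenever the normalized $\tilde\mu_i$'s are $\delta'$-separated in direction, and Gaussian anti-concentration together with a union bound over the $\binom{k}{2}$ pairs ensures $\delta' \ge 1/\mathrm{poly}(n,1/\rho)$ with probability $1-\exp(-\Omega(n))$. Combining the singular value bound, the separation bound, and the sampling noise bound, the hypotheses of Theorem~\ref{thm:jennrich} are satisfied with $\kappa, 1/\delta \le \mathrm{poly}(n,1/\rho)^{O(1/\epsilon)}$, so choosing the target sample-noise level $\eta'$ small enough (an inverse polynomial in these quantities times $\eta$) gives the stated guarantee. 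Adding up the failure probabilities of the sampling step, the singular-value bound via decoupling, and the separation bound, all of which are $\exp(-\Omega(n))$, yields the final success probability claimed in the corollary.
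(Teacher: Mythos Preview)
Your proposal is correct and follows essentially the same approach as the paper: use Proposition~\ref{prop:gmm:moments} with $\ell = \Theta(1/\epsilon)$, flatten the resulting symmetric tensor to an order-$3$ tensor and invoke Jennrich's algorithm, and handle the fact that the factor columns are $\tilde\mu_i^{\otimes \ell'}$ (repeated copies of the same perturbed vector rather than independent perturbations) via the decoupling inequalities that the paper explicitly points to just before the corollary. The paper itself does not spell out the details beyond this outline and defers to \cite{BCMV,Belkinetal}; your write-up fills in exactly the steps the paper sketches, including correctly identifying the decoupling step as the main technical issue and the anti-concentration argument for the pairwise separation of the third-mode factors.
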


In the smoothed analysis setting for hidden markov models (HMM), the model is generated using a randomly perturbed observation matrix $\tOcal$, obtained by adding independent Gaussian random vectors drawn from  $N(0,\rho^2/n)^n$ to each column of $\mathcal{O}$. These techniques also give similar smoothed analysis guarantees for learning HMMs in the overcomplete setting when $n \ge k^{\epsilon}$ dimensions (using $O(1/\epsilon)$ consecutive observations), and under sufficient sparsity of the transition matrix. See \cite{BCPV} for details.  
%
%
Smoothed analysis results have also been obtained for other problems like overcomplete ICA~\cite{GVX}, learning mixtures of general Gaussians~\cite{GHK},  
other algorithms for higher-order tensor decompositions~\cite{MSS, BCPV}, and recovering assemblies of neurons~\cite{ADMPSV18}.



\section{Other Algorithms for Tensor Decompositions}

The algorithm we have seen (based on simultaneous  diagonalization) has provable guarantees in the quite general smoothed analysis setting. However, there are other considerations like running time and noise tolerance, for which the algorithm is sub-optimal -- for example, iterative heuristics like alternating least-squares or alternating minimization are more popular in practice because of faster running times~\cite{KB09}. 
There are several other algorithmic approaches for tensor decompositions that work under different conditions on the input. The natural considerations are the generality of the assumptions, and the running time of the algorithm. The other important consideration is the robustness of the algorithm to noise or errors. I will briefly describe a selection of these algorithms, and comment along these axes. As we will discuss in the next section, the different algorithms are incomparable because of different strengths and weaknesses along these three axes. 


\vspace{5pt} 

\noindent {\em Tensor Power Method.} The tensor power method gives an alternate algorithm for symmetric tensors in the full-rank setting that is inspired by the matrix power method. The algorithm is designed for symmetric tensors $T \in \R^{n \times n \times n}$ with an {\em orthogonal decomposition} of rank $k \le n$ of the form $\sum_{i=1}^k \lambda_i v_i^{\otimes 3}$ where the vectors $v_1, \dots, v_k$ are orthonormal.
 Note that not all matrices need to have such an orthogonal decomposition. However in many learning applications (where we have access to the second moment matrix), 
one can use a trick called {\em whitening} to reduce to the orthogonal decomposition case by a simple basis transformation.

The main component of the tensor power method is an iterative algorithm to find one term in the decomposition 
that repeats the following power iteration update (after initializing randomly) until convergence
$ z \leftarrow \frac{T(\cdot, z,z)}{\norm{T(\cdot,z,z)}_2}$. Here the vector $T(\cdot, z,z)=u$ where $u(i)=\sum_{i_2,i_3} T(i,i_2,i_3) z_{i_2} z_{i_3}$.
The algorithm then removes this component and recurses on the remaining tensor. This method is also known to be robust to inverse polynomial noise, and is known to converge quickly after the whitening. See \cite{AGHKT12} for such guarantees. 


\vspace{5pt} 

\noindent {\em FOOBI algorithm and variants. } 
In a series of works, Cardoso and others~\cite{Cardoso, DLCC} devised an algorithm, popularly called the {\em FOOBI algorithm} for symmetric decompositions of overcomplete tensors of order $4$ and above.  At a technical level, the FOOBI algorithm finds rank-one tensors in a linear subspace, by designing a ``rank-1 detecting gadget''. Recently, the FOOBI algorithm and generalizations have been shown to be robust to inverse polynomial error in the smoothed analysis setting for order $2\ell$ tensors up to rank $k \le n^{\ell}$ (see ~\cite{MSS,HSS19} for order $4$ and \cite{BCPV} for higher even orders).  

\vspace{5pt} 

\noindent {\em Alternating Minimization and Iterative Algorithms. } 
Recently, Anandkumar et al.~\cite{AGJ15} analyzed popular iterative heuristics like alternating minimization 
for overcomplete tensors of order $3$ 
and gave some sufficient conditions for both local convergence and global convergence. 
Finally, a closely related non-convex problem is that of computing the ``spectral norm'' i.e., maximizing $\iprod{T,x^{\otimes \ell}}$ subject to $\norm{x}_2=1$; under certain conditions one can show that the global maximizers are exactly the underlying factors.  The optimization landscape of this problem for tensors 
has also been studied recently
(see \cite{GeMa}). But these results all mainly apply to the case when the factors of the decomposition are randomly chosen, which is much less general than the smoothed analysis setting.

\vspace{5pt} 

\noindent {\em Sum-of-squares algorithms. } 
The sum-of-squares hierarchy (SoS) or the Lasserre hierarchy is a powerful family of algorithms based on semidefinite-programming. 
Algorithms based on SoS typically consider a related polynomial optimization problem with polynomial inequalities. 
A key step in these arguments is to give a low-degree sum-of-squares proof of uniqueness; this is then ``algorithmicized'' using the SoS hierarchy. 
SoS-based algorithms are known to give guarantees that can go to overcomplete settings even for order $3$ tensors (when the factors are random), and are known to have higher noise tolerance. In particular, they can handle order-$3$ symmetric tensors of rank $k = \tilde{O}(n^{1.5})$, when the factors are drawn randomly from the unit sphere~\cite{MSS}. 
The SoS hierarchy also gives robust variants of the FOOBI algorithm, and get quasi-polynomial time guarantees under other incomparable conditions~\cite{MSS}. SoS based algorithms are too slow in practice because of large polynomial running times. Some recent works explore an interesting middle-ground; they design spectral algorithms that are inspired by these SoS hierarchies, but have faster running times (see e.g., \cite{HSSS16}).  
%
%
%

\section{Discussion and Open questions} \label{sec:open}

The different algorithms for tensor decompositions are incomparable because of different strengths and weaknesses.
 A major advantage of SoS-based algorithms is their significantly better noise tolerance; in some settings it can go up to constant error measured in spectral norm (of an appropriate matrix flattening), while other algorithms can get inverse polynomial error tolerance at best. This is important particularly in learning applications, since there is significant modeling errors in practice. 
However, many of these results mainly work in restrictive settings where the factors are random (or incoherent). 
On the other hand, the algorithms based on simultaneous decompositions and variants of the FOOBI algorithm work in the significantly more general smoothed analysis setting, but their error tolerance is much poorer. Finally, iterative heuristics like alternating minimization are the most popular in practice because of their significantly faster running times; however known theoretical guarantees are significantly worse than the other methods.   

Another direction where there is a large gap in our understanding is about conditions and limits for efficient recovery. This is particularly interesting under conditions that guarantee that the low-rank decomposition is (robustly) unique, as they imply learning guarantees. We list a few open questions in this space.

For the special case of $3$-tensors in $\R^{n \times n \times n}$, recall that Jennrich's algorithm needs the factors to be linearly independent, hence $k \le n$. On the other hand, Kruskal's uniqueness theorem (and its robust analogue) guarantees uniqueness even up to rank $3n/2-1$. 
Kruskal in fact gave a more general sufficient condition for uniqueness in terms of what is known as the {\em Kruskal rank} of a set of vectors~\cite{Kru77}. 
But there is no known algorithmic proof!

\noindent {\bf Open Problem.} {\em
Is there a (robust) algorithm for decomposing a $3$-tensor $T$ under the conditions of Kruskal's uniqueness theorem?}

We also do not know if there is any smoothed polynomial time algorithm that works for rank $(1+\epsilon)n$ for any constant $\epsilon>0$. Moreover, we know powerful statements using ideas from algebraic geometry that {\em generic} tensors of order $3$ have unique decompositions up to rank $n^2/3$~\cite{CO12}. However, these statements are not robust to even inverse polynomial error. Is there a robust analogue of this statement in a smoothed analysis setting? These questions are also interesting for order $\ell$ tensors. 
Most known algorithmic results for tensor decompositions also end up recovering the decomposition (thereby proving uniqueness).
However, even for order-$3$ tensors with random factors, there is a large gap between conditions that guarantee uniqueness vs conditions that ensure tractability.

\noindent {\bf Open Problem.} {\em
Is there a (robust) algorithm for decomposing a $3$-tensor $T$ with random factors for rank $k=\omega(n^{3/2})$?}


\section*{Acknowledgments}
I thank Aditya Bhaskara for many discussions related to the chapter, and Tim Roughgarden, Aidao Chen, Rong Ge and Paul Valiant for their comments on a preliminary draft of this chapter. 

\bibliographystyle{alpha}
\bibliography{chap19}



\section{Exercises}
\begin{enumerate}
\item The symmetric rank of a symmetric tensor $T$ is the smallest integer $r>0$ s.t., $T$ can be expressed as $T=\sum_{i=1}^r u_i^{\otimes \ell}$ for some $\set{u_i}_{i=1}^k$. Prove that for any symmetric tensor of order $\ell$, the symmetric rank is at most $2^{\ell} \ell!$ times the rank of the tensor\footnote{Comon's conjecture asks if for every symmetric tensor, the symmetric rank is equal to the rank. A counterexample was shown recently by Shitov. It is open what the best gap between these two ranks can be as a function of $\ell$.}. 
{\em Hint: For $\ell=2$, if $u_i \otimes v_i$ was a term in the decomposition, we can express $u_i \otimes v_i + v_i \otimes u_i = \tfrac{1}{2}(u_i + v_i)^{\otimes 2} - \tfrac{1}{2}(u_i - v_i)^{\otimes 2}$.}
\label{exer:symm_rank}

\item Let $u,v \in \R^n$ be two orthonormal vectors, and consider the tensor $A=u \otimes u \otimes v+v \otimes u \otimes u+u \otimes v \otimes u$. Prove that it has rank $3$. Also prove that it can be arbitrarily well approximated by a rank $2$ tensor. \\
{\em Hint: Try to express $A$ as a difference of two symmetric rank one tensors with large entries (Frobenius norm of $\Theta(m)$), so that the error term is $O(1/m)$.}
 \label{exer:border_rank}

\item Construct an example of a matrix $U$ such that the Kruskal-rank of $U \odot U$ is at most twice the Kruskal-rank of $U$. 
{\em Hint: Express the identity matrix as $\sum_i u_i u_i^T$ for two different orthonormal basis. }   
\label{exer:KRproduct}

\item Prove Proposition~\ref{prop:hmm:moments}. \label{exer:hmm}

\item Complete the proof of Claim~\ref{claim:l2} and hence Lemma~\ref{lem:quantitativebound} for $\ell=2$. 
\label{exer:proof}

\end{enumerate}

\end{document}